\newtheorem{theorem}{Theorem}
\def\BibTeX{{\rm B\kern-.05em{\sc i\kern-.025em b}\kern-.08em
		T\kern-.1667em\lower.7ex\hbox{E}\kern-.125emX}}
\title{Value-based Proactive Caching for Sensing Data in Vehicular Networks: An Operator's Perspective
}
\author{
  Yantong Wang, Ke Liu, Hui Ji, Jiande Sun \\
  School of Information Science and Engineering \\
  Shandong Normal University \\
   Ji'nan, 250358, China\\
  \texttt{\{yantong, liuke,~hui.ji,~jiandesun\}@sdnu.edu.cn} \\
}
\begin{document}
\maketitle

\begin{abstract}
Access to sensing data (SD) is crucial for vehicular networks to ensure safe and efficient transportation services. Given the vast volume of data involved, proactive caching required SD is a pivotal strategy for alleviating network congestion and improving data accessibility. 
Despite merits, existing studies predominantly address SD caching within a single slot. Therefore, these approaches lack scalability for scenarios involving multi-slots and are not well-suited for network operators who manage resources within a long-term cost budget. Moreover, the oversight of service capacity at caching nodes may result in substantial queuing delays for SD reception.
To tackle these limitations, we jointly consider the problem of anchoring SD caching and allocating from an operator's perspective. A value model incorporating both temporal and spacial characteristics is given to estimate the significance of various caching decisions. Subsequently, a stochastic programming model is proposed to optimize the long-term system performance, which is converted into a series of online optimization problem by leveraging the Lyapunov method and linearized via introducing auxiliary variables. To expedite the solution, we provide a binary quantum particle swarm optimization based algorithm with quadratic time complexity. 
Numerical investigations demonstrate the superiority of proposed algorithms compared with other schemes in terms of energy consumption, response latency, and cache-hit ratio.
\end{abstract}

\keywords{Sensing Data Caching \and Request Allocation \and Lyapunov Optimization \and Binary Quantum Particle Swarm Optimization \and Vehicular Networks}

\section{Introduction}
\label{sec:introduction}

Vehicular networks integrating vehicles, roadways, and cloud infrastructure to enable seamless communication and data exchange have emerged as a pivotal technology for smart transportation.  
As a cornerstone of the ecosystem, sensing data (SD) are generated by various sensors on vehicles and road side units (RSUs), which improves the environmental perception for mobile vehicles and supports key applications such as traffic management, navigation, and collision avoidance.
However, the continuous growth in vehicular connections and the high demand for real-time SD have resulted in significant pressure on both the fronthaul and hackhaul network~\cite{chen2023tasks}. To alleviate this, proactive caching strategies have been widely adopted to bring content closer to users, thereby reducing latency and improving data accessibility. 

Compared with infotainment content, SD is time-sensitive and only valid within a specific range and time slots\cite{li2023joint}. Therefore, the caching strategies designed for entertainment contents, either in a single-slot\cite{yang2023dynamic} or a long-term fashion\cite{choi2024intelligent}, cannot be applied to the caching design for SD directly. The research focusing on SD dissemination has mainly made caching decisions among RSUs and base stations (BSs) based on an utility model which captures characteristics such as temporally-variant feature\cite{liu2021cooperative}, spacial validity\cite{chai2019hierarchical}, dynamic popularity\cite{khan2024proactive} and hybrid revenue metrics\cite{wu2023hybrid}. Additionally, caching policies for both infotainment content and SD are jointly considered in\cite{zhang2020hierarchical,khodaparas2024intelligent}.

However, the aforementioned studies on SD caching have yet to account for the service capability in terms of queuing, which potentially escalate response latency. 
Moreover, the majority of existing works typically make SD caching decisions in a single time slot, which hinders scalability in multi-slots scenarios, especially considering long-term resource constraints coupling caching decisions over time. Generally, network operators manage resources within the context of a long-term cost budget rather than a short-term one. As a result, these approaches are not suitable for network operators.

To fill the aforementioned research gaps and offer caching decisions tailored to the needs of network operators, we shift the SD caching in vehicular networks to a multi-slots approach and consider the request allocation for load balancing purpose. 
A value model is proposed by jointly considering the temporal and spacial attributes of SD, encompassing factors such as freshness, spatial validity, traffic regulations, and popularity. 
Then we develop a long-term stochastic integer nonlinear programming model aimed at maximizing the long-term value of SD caching, equated to the potential earnings of network operators, while adhering to constraints related to energy budgets, delay tolerance, and resource limitations.
Departing from previous studies that concentrate on vehicle-level decisions, our approach adopts a traffic region-centric viewpoint. This shift is justified by the network operator's priority, which is to serve the collective needs within the defined regions. 
To this end, the main contributions are summarized as follows: 
\begin{itemize}
	\item A long-term region-centric programming model addressing both caching placement and request allocation is formulated, based on a value model considering temporal and spacial characteristics of SD;
	\item  We decompose the model via Lyapunov optimization and then linearize it through auxiliary variables. Moreover, a heuristic algorithm with quadratic complexity is implemented to generate solutions recursively;
	\item To understand the performance gap among evaluated methods, a wide set of simulations is conducted under varying degrees of traffic congestion.
\end{itemize}

The rest of the paper is organized as follows. Section \ref{sec:model} presents the mathematical programming model for SD caching in IoVs. In Section \ref{sec:algorithm}, we discuss the decomposition of stochastic model via Lyapunov optimization and propose a BQPSO-based heuristic algorithm to obtain suboptimal solutions. Section \ref{sec:simulation} and \ref{sec:conclusions} provide the performance evaluation and conclusions respectively.
\section{System Model}
\label{sec:model}
In this section, we first introduce the network model for communications. Then the value model is presented to measure the significance of different decisions. Moreover, the energy cost and delay model are depicted to estimate the constraints of various request allocations. To this end, we give the optimization model considering long-term time-averaged performance. For ease of reference, the main notations are summarized in Table\ref{tab:notation}.

\begin{table}[htbp]
 \caption{Summary of Main Notations}
  \centering
  \begin{tabular}{cl}
    \toprule
    \textbf{Symbol} & \textbf{Description} \\
    \midrule
    $\{0\}$ & Base Station (BS)\\
    $\mathcal{I}=\{1,2,\cdots,I\}$ & Set of road side units (RSUs)\\
    $\mathcal{J}=\{1,2,\cdots,J\}$ & Set of regions\\
    $\mathcal{K}=\{1,2,\cdots,K\}$ & Set of sensing data (SD)\\
    $\mathcal{T}=\{0,1,\cdots,T-1\}$ & Set of time slots\\
    $\mathcal{I}_j\subseteq\mathcal{I}$ & Set of RSUs covering region $j$\\
    $\mathcal{J}_i\subseteq\mathcal{J}$ & Set of regions under coverage of RSU $i$\\
    \midrule
    $r_{ij}(t)$ & Transmission rate between RSU $i$ and region $j$ during slot $t$\\
    $r_{0j}(t)$ & Transmission rate between BS and region $j$ during slot $t$\\
    $B_i$ & Channel bandwidth of RSU $i$\\
    $h_{ij}(t)$ & Channel gain between RSU $i$ and region $j$ during slot $t$\\
    $P_i$ & Transmission power of RSU $i$\\
    $P_0$ & Transmission power of BS\\
    $N_0$ & Noise power\\
    $F_{k}(t)$ & Freshness ratio of SD $k$ during slot $t$\\
    $A_{ik}(t)$ & Binary variable indicating whether RSU $i$ is affected by SD $k$ during $t$ or not\\
    $H_{jk}(t)$ & Popularity of SD $k$ to region $j$ during slot $t$\\
    $r_{jk}(t)$ & Interests interval between last two SD $k$ requests before slot $t$\\
    $z_{jk}(t)$ & Time gap between last SD $k$ request and current frame $t$\\
    $I_{jk}(t)$ & Accumulated requests number for SD $k$ from region $j$ before $t$\\
    $w$ & Power efficiency for caching 1-bit data\\
    $\tau$ & Time slot length\\
    $s_k$ & Size of SD $k$\\
    $S_i$ & Caching capacity of RSU $i$\\
    $\bar{E}$ & Time-averaged energy budget\\
    $d_{jk}(t)$ & Demand intensity for SD $k$ in region $j$ during slot $t$\\
    $\mu_i$ & Request service capacity of RSU $i$\\
    $\delta_j$ & Delay tolerance of region $j$ \\
    \midrule
    $x_{ik}(t)$ & Decision variable indicating whether SD $k$ is cached in RSU $i$ during $t$\\
    $y_{ijk}(t)$ & Decision variable representing the number of requests for SD $k$ from region $j$ \\
    &is served by RSU $i$ during $t$\\
    \bottomrule
  \end{tabular}
  \label{tab:notation}
\end{table}

\subsection{Network Model}
\begin{figure}[htbp]
  \centering
  \fbox{\includegraphics[width=.8\textwidth]{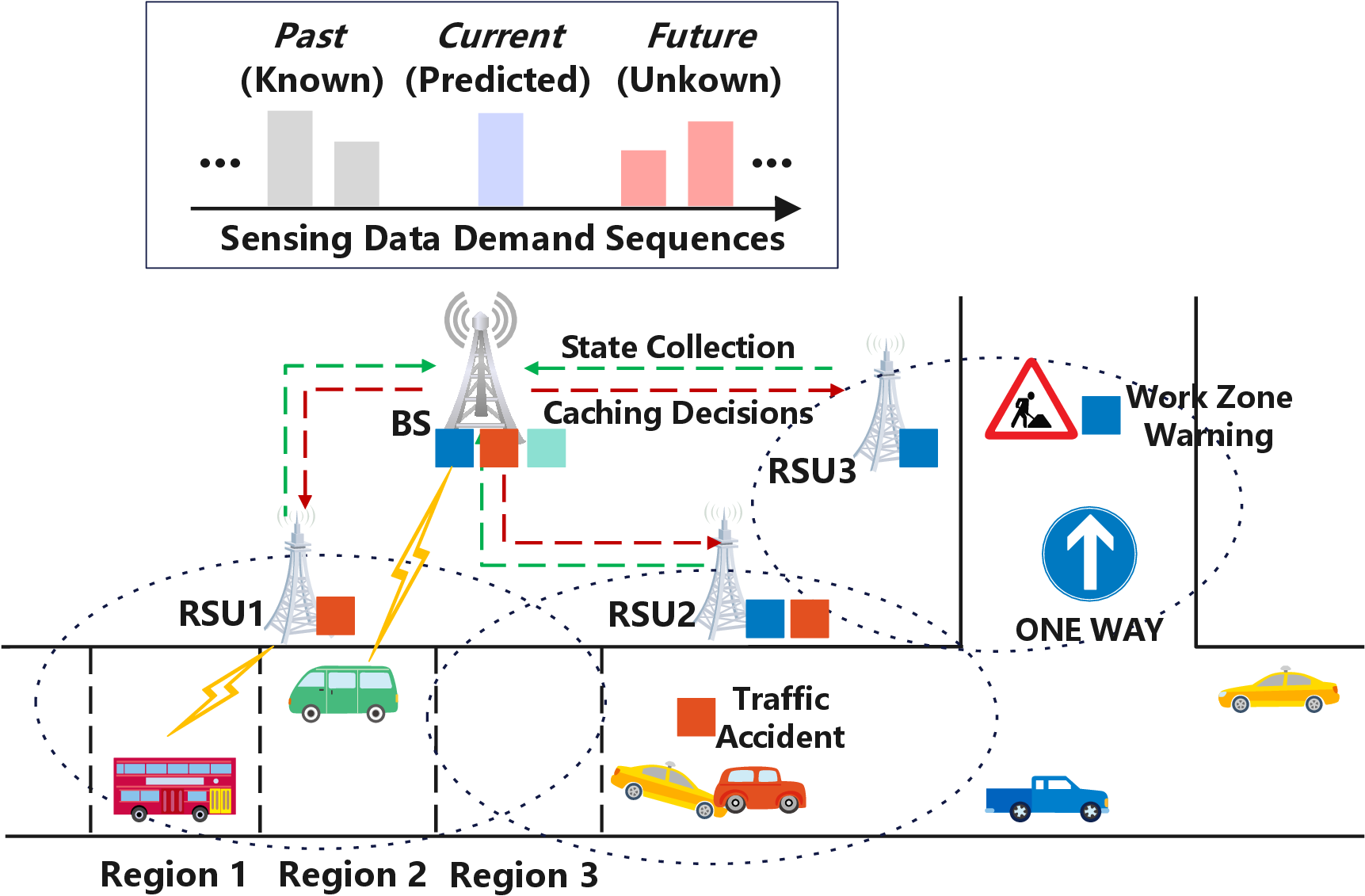}}
  \caption{Proactive caching in a traffic status monitoring scenario.}
  \label{fig:structure}
\end{figure}

As shown in Fig.\ref{fig:structure}, we consider a typical vehicular edge network in urban ereas consisting of one base station (BS) and $I$ road side units (RSUs), indexed by $\{0\}$ and $\mathcal{I}=\{1,2,\cdots,I\}$ respectively. Both BS and RSU are equipped with storage facilities and thus can provide caching services to vehicles under coverage. We assume each RSU has a limited storage capacity $S_i$ and the BS has a sufficiently large space to host all unexpired contents. The coverage of BS is divided into $J$ disjoint regions, indexed by $\mathcal{J}=\{1,2,\cdots,J\}$. Due to the dense deployment of RSUs, the vehicle in region $j$ is in the transmission range of multiple RSUs, denoted by $\mathcal{I}_j\subseteq\mathcal{I}$. The set of regions that are covered by RSU $i$ is represented by $\mathcal{J}_i\subseteq\mathcal{J}$. 
In addition, RSUs and vehicles are equipped with various sensors (camera, Radar, GPS, etc.) to generate a series of SD $\mathcal{K}=\{1,2,\cdots,K\}$ regrading traffic status.

The caching system is assumed to operate in a time-slotted manner with frames $t\in\mathcal{T}=\{0,1,\cdots,T-1\}$, which matches the duration of SD replacement. For the purpose of mathematical tractability, we suppose the popularity distribution and vehicle mobility between regions change slowly with respect to the replacement\footnote{The performance of the proposed method might deteriorate when violating this assumption.}. To achieve centralized control, each RSU is required to send previously recorded SD requests and network states to BS at the beginning of each frame. Then BS makes caching decisions accordingly. If the requested SD is cached at nearby RSUs, vehicles can be served instantly. Otherwise this SD should be retrieved from the BS. 

Basically, the transmission rate depends on the distance between vehicle and the connected RSU\cite{liu2021cooperative}. By making each region small and applying OFDMA as transmission mode\cite{qin2020collaborative}, the data rate between RSU $i$ and region $j$ becomes 
\begin{equation}
\label{fml:rate}
    r_{ij}(t)=\mathbb{E}\left[B_i\log\left(1+\frac{h_{ij}(t)P_i}{N_0}\right)\right],
\end{equation}  
% $$r_{ij}(t)=B_i\log\left(1+\frac{h_{ij}(t)P_i}{N_0}\right),$$  
where $B_i$ is the predefined channel bandwidth of RSU $i$, $h_{ij}(t)$ is the channel gain between RSU $i$ and region $j$ during frame $t$, $P_i$ is the transmission power of RSU $i$ and $N_0$ is the noise power. Since the channel condition is not realized at the decision time, we use the expected transmission rate in \eqref{fml:rate}. Let $r_{0j}$ be the transmission data rate between BS and region $j$. Considering BS can only provide limited ratio resources from transmission\cite{chen2020collaborative}, we have $r_{oj}<r_{ij},\forall i\in\mathcal{I}_j$, which is the motivation on edge caching.

In this paper, the decisions of SD caching and request allocation are considered jointly. Therefore, the following decision variable sets are introduced:
\begin{itemize}
	\item \textbf{SD Caching}: $\mathbf{X}\!=\!\{x_{ik}(t)|x_{ik}(t)\in\{0,1\},\! i\in\mathcal{I},\!k\in\mathcal{K},\!t\in\mathcal{T}\}$, where $x_{ik}=1$ indicates that SD $k$ is cached in RSU $i$, otherwise, we have $x_{ik}=0$.
	\item \textbf{Request Allocation}: $\mathbf{Y}\!=\!\{y_{ijk}(t)|y_{ijk}(t)\in\mathbb{N},\! i\in\mathcal{I},\!j\in\mathcal{J},\!k\in\mathcal{K},\!t\in\mathcal{T}\}$ represents the number of request for SD $k$ from region $j$ that is served by RSU $i$.
\end{itemize}

It should be noted that in this paper we consider the decision for region $j$ instead of individual vehicle, as our emphasis is on the long-term expected performance of the vehicular caching system from the network operator's perspective, and the region captures statistical characteristics of SD requests. 
% As aforementioned, we consider the allocation for regions instead of individual vehicle since this work concentrates on the long-term expected performance of caching system in this paper, and the region captures statistical characteristics of requests.

\subsection{Value Model}
Considering the temporal and spacial characteristics of SD, we employ the freshness ratio $F_k(t)$, the affected scope indicator $A_{ik}(t)$ and the popularity $H_{jk}(t)$ to estimate the value of caching a specific SD in an RSU. The freshness ratio can be measured by the remaining time slots before expiration as $$F_k(t)=\max\left\{\frac{(t^E_k-t)}{(t^E_k-t^U_k)},0\right\},$$
where $t^U_k$ and $t^E_k$ are the last updating time slot and the expiration frame of SD $k$ respectively. Small $f_k(t)$ values show that SD $k$ is stale.  

$A_{ik}(t)$ is a binary indicator, where $A_{ik}(t)=1$ means that RSU $i$ locates in the affected scope of SD $k$ during time slot $t$, and $A_{ik}(t)=0$, otherwise. $A_{ik}(t)$ is affected by the space validity of SD and traffic control rules. 
% The typical space validity of a work zone warning data is from 500m to 1km\cite{khelifi2020named}, thus the RSU within such coverage is viewed as under the affected scope. Regarding the influence of traffic control rules, such as the one-way street and tidal lane, the SD monitored area and RSU become unilateral connected. 
A toy example is shown in Fig.\ref{fig:structure}. Considering the road covered by RSU3 is a one-way street in time slot $t=1$, the traffic accident sensing data $k=1$ happened in RSU2 is not attractive to the vehicle in RSU $i=3$, which results in $A_{31}(1)=0$. Meanwhile, the vehicles in RSU $i=2$ are within the space validity range of the work zone warning data $k=2$ which is generated in RSU3. Therefore we have $A_{22}(1)=1$. 
% The typical space validity of different SD types is depicted in~\cite{khelifi2020named}. The traffic control rules are available from transportation authorities.

Furthermore, a caching operation should correspond to the SD request pattern, which is based on the content popularity. Then, we employ the idea of popularity value from \cite{khan2024proactive} and define $H_{jk}(t)$ as follows:
$$H_{jk}(t)=\frac{1}{2}\left[\alpha\cdot e^{-\frac{r_{jk}(t)}{I_{jk}(t)}}+\beta\cdot e^{-\frac{z_{jk}(t)}{I_{jk}(t)}}\right],$$
where $\alpha$ and $\beta$ are impact factors to control the weight; $r_{jk}(t)$ is the requesting interval between last two requests for SD $k$ from region $j$ before current slot $t$; $z_{jk}(t)$ is the time gap between last SD $k$ requesting slots and current frame; and $I_{jk}(t)$ is the accumulated requests number for SD $k$ from region $j$ before frame $t$. Therefore, a larger $H_{jk}(t)$ indicates SD $k$ is frequently-requested by region $j$. To this end, the value model is built as
\begin{equation}
V(t)=\sum_{i\in\mathcal{I}}\sum_{j\in\mathcal{J}_i}\sum_{k\in\mathcal{K}}A_{ik}(t)F_k(t)H_{jk}(t)\cdot y_{ijk}(t).
\end{equation} 

\begin{comment}
Furthermore, a caching operation should correspond to the SD request pattern, which is based on the content popularity. We use the SD demand $\mathbf{d}_k(t)=\{d_{jk}(t)\}$ to evaluate the popularity where $d_{jk}(t)$ is the request intensity for SD $k$ in region $j$. Particularly, the arrival of SD requirement is assumed to follow a Poisson process . As illustrated in Fig.\ref{fig:structure}, the BS does not know what the future SD requests are, when they will arrive, and hence cannot make a long-term prediction. However, the short-term prediction for the immediate time slot is available and many accurate enough prediction algorithms have been developed\cite{li2016trend}. Notice that request allocation $y_{ijk}(t)$ is affected by both caching decision and demand intensity. To this end, the value of caching operation can be written as 
\begin{equation}
V(t)=\sum_{i\in\mathcal{I}}\sum_{j\in\mathcal{J}_i}\sum_{k\in\mathcal{K}}A_{ik}(t)f_k(t)\left[\alpha e^{-\frac{r_{jk}(t)}{I_{jk}(t)}}\right]\cdot y_{ijk}(t).
\end{equation}
\end{comment}

\subsection{Cost Model}

From the perspective of network operator, SD caching results in additional expenditures such as the energy consumption for data storage and delivery.
% Notice that SD caching results in additional expenditures for network operators, such as the energy consumption for data storage and delivery. 
In this paper, the caching energy is modeled as a proportional function of content size $$E_i^C(t)=w\sum_{k\in\mathcal{K}}s_k\cdot x_{ik}(t)\tau,$$ where $w$ is the power efficiency for storing 1-bit data; $\tau$ is the time slot length; and $s_k$ is the size of SD $k$. As aforementioned, the BS caches all unexpired SDs, which is independent to the SD caching decision $\mathbf{X}$. Therefore the caching cost in BS is viewed as a constant and is omitted in the following analysis for simplification.  Additionally, the transmission energy consumption of RSU $i$ can be given by $$E_i^T(t)=\sum_{j\in\mathcal{J}_i}\sum_{k\in\mathcal{K}}P_is_kr^{-1}_{ij}(t)y_{ijk}(t).$$ And the transmission cost for BS becomes $$E_0^T(t)=\sum_{j\in\mathcal{J}}\sum_{k\in\mathcal{K}}P_0s_kr^{-1}_{0j}(t)[d_{jk}(t)-\sum_{i\in\mathcal{I}_j}y_{ijk}(t)],$$
where $d_{jk}(t)$ is the request intensity for SD $k$ in region $j$ during slot $t$. As illustrated in Fig.\ref{fig:structure}, the BS does not know what the future SD requests are and when they will arrive. 
% Therefore a long-term prediction is difficult to make. 
After that, the total energy cost in frame $t$ can be represented by $$E(t)=\sum_{i\in\mathcal{I}}[E_i^C(t)+E_i^T(t)]+E_0^T(t).$$ Furthermore, the caching service provider generally orchestrates the network within a long-term cost budget. Let $\bar{E}$ denote the time-averaged cost budget. The expected long-term energy consumption constraint can be expressed by
\begin{equation*}
	C_1:~\lim_{T\rightarrow\infty}\frac{1}{T}\sum_{t=0}^{T-1}\mathbb{E}\left[E(t)\right]\leq\bar{E}.
\end{equation*}

\subsection{Delay Model}

Generally, the delay in communication systems can be divided into processing delay, transmission delay, propagation delay, and queuing delay. Here different caching and requests assignments would pose less impact to the propagation delay. Hereafter we focus on the other three kinds of latency. For the transmission delay between served RSU $i$ and region $j$, we have $$T_{ij}(t)={\sum_{k\in\mathcal{K}}s_k\cdot y_{ijk}(t)}{r^{-1}_{ij}(t)}.$$
In the case of cache missing, the transmission delay from BS $\{0\}$ to region $i$ becomes
$$
	T_{0j}(t)={\sum_{k\in\mathcal{K}}s_k\left[d_{jk}(t)-\sum_{i\in\mathcal{I}_j}y_{ijk}(t)\right]}{r^{-1}_{0j}(t)}.
$$

According to \cite{wang2019proactive}, the access to caching service at RSU $i$ follows an M/M/1 model, in which the processing time for each SD request is an exponential distribution with average $\mu_i$. Moreover, the arriving rate in frame $t$ is the number of allocated request. Thus the expected sojourn latency, which includes processing delay and queuing delay, is expressed as
$$
	L_i(t)=\left[\mu_i-\sum_{j\in\mathcal{J}_i}\sum_{k\in\mathcal{K}}y_{ijk}(t)\right]^{-1}.
$$
Similarly, the sojourn latency in BS for cache missing case can be written as
$$
	L_0(t)=\left\{\mu_0-\sum_{j\in\mathcal{J}}\sum_{k\in\mathcal{K}}\left[d_{jk}(t)-\sum_{i\in\mathcal{I}_j}y_{ijk}(t)\right]\right\}^{-1}.
$$

\begin{comment}
To keep the waiting queue in RSU and BS stable, we introduce two more constraints as
\begin{equation*}
C_2:~\sum_{j\in\mathcal{J}_i}\sum_{k\in\mathcal{K}}y_{ijk}(t)<\mu_i,\quad \forall i\in\mathcal{I},t\in\mathcal{T},
\end{equation*}
\begin{equation*}
C_3:~\sum_{j\in\mathcal{J}}\sum_{k\in\mathcal{K}}\Big[d_{jk}(t)-\sum_{i\in\mathcal{I}_j}y_{ijk}(t)\Big]<\mu_0,\quad \forall t\in\mathcal{T}.
\end{equation*}
\end{comment}

Notice that requests from region $j$ can be served by multiple RSUs $\mathcal{J}_i$ in parallel. Therefore the response latency for region $j$ is represented by 
$$
	D_j(t)=\max\{L_i(t)+T_{ij}(t),L_0(t)+T_{0j}(t)\}.
$$
To guarantee the quality of caching service, the expected response latency for region $j$ cannot exceed the correlated delay tolerance $\delta_j$:
\begin{equation*}
	C_2:~\mathbb{E}\left[D_j(t)\right]\leq\delta_j, \quad \forall j\in\mathcal{J},t\in\mathcal{T}.
\end{equation*}

\subsection{Optimization Model}

For each RSU, the utilized caching space should be limited within the storage capacity $S_i$:
\begin{equation*}
	C_3:~\sum_{k\in\mathcal{K}}s_k\cdot x_{ik}(t)\leq S_i,\quad \forall i\in\mathcal{I},t\in\mathcal{T}.
\end{equation*}

The request allocation $\mathbf{Y}$ is influenced by the caching operation $\mathbf{X}$ and the demand intensity $d_{jk}(t)$, which means
\begin{equation*}
	C_4:~y_{ijk}(t)\leq M\cdot x_{ik}(t),\quad \forall \! i\in\mathcal{I},\!j\in\mathcal{J},\!k\in\mathcal{K},\!t\in\mathcal{T},
\end{equation*}
\begin{equation*}
	C_5:~\sum_{i\in\mathcal{I}_j}y_{ijk}(t)\leq d_{jk}(t),\quad \forall \!j\in\mathcal{J},\!k\in\mathcal{K},\!t\in\mathcal{T},
\end{equation*}
where $M$ is a sufficiently large number. $C_4$ guarantees the RSU providing caching service holds relevant SD. $C_5$ indicates upper-bound of the number of request allocations. 

To obtain the optimal decisions, we maximize the expected long-term time-averaged caching value under the constraints of cost budget, delay tolerance, and resource limitation, which is formulated as the following model:
\begin{equation}
	\begin{aligned}
		(\mathbf{P1}):\quad&\mathop{\max}_{\mathbf{X},\mathbf{Y}}\lim_{T\rightarrow\infty}\frac{1}{T}\sum_{t=0}^{T-1}\mathbb{E}\left[V(t)\right]\\
		\textrm{s.t.}\quad&C_1-C_5.\notag
	\end{aligned}
\end{equation}
Theoretically, $(\mathbf{P1})$ is a stochastic integer nonlinear programming model with the dynamic network status. As a result, it is impractical to solve $(\mathbf{P1})$ without global information over the long run, which is difficult to predict in advance. Furthermore, the caching decisions are coupled with long-term energy constraints in $C_1$, which indicates that consuming more energy in some slots will squeeze the available energy for future usage. Therefore, it is challenging to solve $(\mathbf{P1})$.  
\section{Online Caching Decision Algorithm}
\label{sec:algorithm}
\subsection{Problem Transformation}
Recently, Lyapunov optimization has been viewed as an efficient method to decouple long-term problems. Following this idea, we convert the original problem into a series of online optimization problems. At the beginning, we construct a virtual queue $Q(t)$ to measure the exceeded energy of current caching decision at the end of time slot $t$, which is defined as
\begin{equation}
	\label{efml:virtual_queue}
	Q(t+1)\triangleq\max\{Q(t)+E(t)-\bar{E},0\}.
\end{equation}
The initial queue backlog $Q(0)$ is $0$. 
The following theorem provides an equivalent expression of the long-term budget constraint:
\begin{theorem}
\label{the1}
	Satisfying constraint $C_1$ turns into guaranteeing the stability of virtual queue.
\end{theorem}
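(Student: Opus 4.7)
The plan is to use the standard Lyapunov/virtual-queue argument that bridges mean rate stability and long-term time-averaged inequality constraints. The idea is to telescope the virtual queue recursion and then let $T\to\infty$, so that the energy budget constraint $C_1$ reduces to a statement about the asymptotic growth of $\mathbb{E}[Q(T)]/T$.

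First I would drop the maximum in the virtual queue update \eqref{efml:virtual_queue} using the trivial bound $Q(t+1)=\max\{Q(t)+E(t)-\bar{E},0\}\geq Q(t)+E(t)-\bar{E}$, which holds for every realization. Rearranging gives $E(t)-\bar{E}\leq Q(t+1)-Q(t)$, a per-slot inequality that no longer involves the nonlinear $\max$ operator.

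Next I would telescope this inequality from $t=0$ to $t=T-1$, obtaining $\sum_{t=0}^{T-1}(E(t)-\bar{E})\leq Q(T)-Q(0)=Q(T)$ since $Q(0)=0$. Dividing both sides by $T$, taking expectations, and passing to the limit yields
\begin{equation*}
\lim_{T\to\infty}\frac{1}{T}\sum_{t=0}^{T-1}\mathbb{E}[E(t)]-\bar{E}\leq \lim_{T\to\infty}\frac{\mathbb{E}[Q(T)]}{T}.
\end{equation*}
Hence, if the virtual queue is mean rate stable in the sense that $\lim_{T\to\infty}\mathbb{E}[Q(T)]/T=0$, then constraint $C_1$ follows immediately. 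Conversely, if $C_1$ holds, a symmetric argument on $Q(t+1)\leq Q(t)+E(t)-\bar{E}+\bar{E}$ combined with nonnegativity of $Q(t)$ shows that $\mathbb{E}[Q(T)]/T$ must vanish, establishing the equivalence claimed in the theorem.

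The routine calculations are the telescoping and the limit passage; the only subtle point is being clear about which notion of stability is being used. I expect the main obstacle to be stating the equivalence rigorously: mean rate stability (i.e.\ $\mathbb{E}[Q(T)]/T\to 0$) is the right notion here, whereas stronger notions like strong stability ($\limsup \frac{1}{T}\sum_t\mathbb{E}[Q(t)]<\infty$) would also imply $C_1$ but are not necessary. I would therefore explicitly define mean rate stability at the start of the proof so the equivalence in the theorem is unambiguous, and then the argument reduces to the short telescoping computation above.
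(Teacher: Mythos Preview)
Your forward direction---telescoping $Q(t+1)\geq Q(t)+E(t)-\bar E$, dividing by $T$, taking expectations, and invoking mean rate stability $\lim_{T\to\infty}\mathbb{E}[Q(T)]/T=0$ to conclude $C_1$---is exactly the paper's argument, step for step. That is all the paper proves: it shows that stability of the virtual queue is \emph{sufficient} for $C_1$, so that $C_1$ can be safely replaced by the stability condition $C_6$.

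Your additional converse sketch, however, does not go through. From $Q(t+1)\leq Q(t)+E(t)$ you only get $\mathbb{E}[Q(T)]/T\leq \frac{1}{T}\sum_{t=0}^{T-1}\mathbb{E}[E(t)]$, and under $C_1$ the right-hand side is asymptotically at most $\bar E$, not zero. So this argument yields boundedness of $\mathbb{E}[Q(T)]/T$, not vanishing. The converse is in fact not generally true without further structure (e.g.\ $C_1$ holding with strict inequality or some ergodicity on $E(t)$), and the paper does not claim or prove it. Since the paper's theorem is really used only in the sufficiency direction, your core argument is correct and matches the paper; just drop the converse paragraph or state clearly that only the implication ``stability $\Rightarrow C_1$'' is being established.
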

\begin{proof}
\label{pro1}
	According to the definition \eqref{efml:virtual_queue}, we have $$Q(t+1)-Q(t)\geq E(t)-\bar{E}.$$
	By summing over $t\in\{0,1,\cdots,T-1\}$ and dividing $T$ on both sides, the above inequality becomes $$\frac{Q(T)}{T}\geq \frac{1}{T}\sum_{t=0}^{T-1}E(t)-\bar{E}.$$
	Take expectation and set limit as $T$ approaches infinity, which results in
	$$\lim_{T\rightarrow\infty}\frac{\mathbb{E}\left[Q(T)\right]}{T}\geq \lim_{T\rightarrow\infty}\frac{1}{T}\sum_{t=0}^{T-1}\mathbb{E}\left[E(t)\right]-\bar{E}.$$
	Since virtual queue is stable, i.e. $$C_6: \lim_{T\rightarrow\infty}\frac{\mathbb{E}\left[Q(T)\right]}{T}=0,$$ constraint $C_1$ is satisfied. 
\end{proof}

\begin{comment}
According to definition \eqref{efml:virtual_queue}, we have $Q(t+1)-Q(t)\geq E(t)-\bar{E}.$
By summing over $t\in\{0,1,\cdots,T-1\}$, dividing $T$ on both sides, and taking the limit as $T$ approaches infinity, the above inequality becomes
$
	\lim_{T\rightarrow\infty}\sum_{t=0}^{T-1}E(t)/T\leq\bar{E}+\lim_{T\rightarrow\infty}{Q(t)}/{T}.
$ 
By limiting $Q(t)$ is stable, i.e. $$C_8: \lim_{T\rightarrow\infty}\frac{Q(t)}{T}=0,$$ satisfying constraint $C_1$ turns into guaranteeing the stability of virtual queue $Q(t)$. 
\end{comment}

Then the \textit{Lyapunov function} is defined as 
$$
	L[Q(t)]\triangleq Q(t)^2/2, 
$$
which presents the congestion level of energy consumption queue. Moreover, we introduce a one-slot \textit{Lyapunov drift} to push $L[Q(t)]$ towards a lower value, which is
$$
	\Delta[Q(t)]\triangleq\mathbb{E}\left\{L[Q(t+1)]-L[Q(t)]|Q(t)\right\}.
$$
The theorem below gives the boundary of \textit{Lyapunov drift}:
\begin{theorem}
\label{the2}
	$$\Delta[Q(t)]\leq\mathcal{B}+Q(t)\mathbb{E}[E(t)-\bar{E}|Q(t)],$$ where $$\mathcal{B}=(E^2_{\text{max}}+\bar{E}^2)/2$$ is a constant number and $$E_{\text{max}}=\max_{t\in\mathcal{T}}E(t).$$
\end{theorem}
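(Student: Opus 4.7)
The plan is to follow the standard Neely-style Lyapunov drift argument. I would start from the queue recursion \eqref{efml:virtual_queue}. Because $\max\{a,0\}^2 \le a^2$ for any real $a$, squaring both sides of the update yields
\begin{equation*}
Q(t+1)^2 \;\le\; \bigl(Q(t)+E(t)-\bar{E}\bigr)^2 \;=\; Q(t)^2 + \bigl(E(t)-\bar{E}\bigr)^2 + 2Q(t)\bigl(E(t)-\bar{E}\bigr).
\end{equation*}
Rearranging and dividing by two gives
\begin{equation*}
L[Q(t+1)] - L[Q(t)] \;\le\; \tfrac{1}{2}\bigl(E(t)-\bar{E}\bigr)^2 + Q(t)\bigl(E(t)-\bar{E}\bigr).
\end{equation*}

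Next I would bound the first term by a deterministic constant. Expanding the square, $(E(t)-\bar{E})^2 = E(t)^2 - 2E(t)\bar{E} + \bar{E}^2$. Since $E(t)\ge 0$ and $\bar{E}\ge 0$, the cross term is nonpositive, so
\begin{equation*}
\tfrac{1}{2}\bigl(E(t)-\bar{E}\bigr)^2 \;\le\; \tfrac{1}{2}\bigl(E(t)^2 + \bar{E}^2\bigr) \;\le\; \tfrac{1}{2}\bigl(E_{\text{max}}^2 + \bar{E}^2\bigr) \;=\; \mathcal{B},
\end{equation*}
using the definition of $E_{\text{max}}$. Combining the two inequalities and taking the conditional expectation with respect to $Q(t)$ (noting that $\mathcal{B}$ is a constant and $Q(t)$ is measurable given itself) produces exactly the claimed bound
\begin{equation*}
\Delta[Q(t)] \;\le\; \mathcal{B} + Q(t)\,\mathbb{E}\bigl[E(t)-\bar{E}\,\big|\,Q(t)\bigr].
\end{equation*}

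The argument is essentially mechanical once the right inequality $\max\{a,0\}^2\le a^2$ is invoked, so there is no serious obstacle. The only subtle point worth flagging is the implicit nonnegativity of $E(t)$ and $\bar{E}$, which is needed to drop the cross term when producing the constant $\mathcal{B}$; this is justified because $E(t)$ is an energy expenditure (a sum of nonnegative caching and transmission costs from the cost model) and $\bar{E}$ is a nonnegative time-averaged budget. It is also worth noting that boundedness of $E(t)$ by $E_{\text{max}}$ is implicitly assumed, which holds because the decision variables $\mathbf{X}$ and $\mathbf{Y}$ are constrained to a finite integer set by $C_3$--$C_5$ and the demand $d_{jk}(t)$ is finite in each slot.
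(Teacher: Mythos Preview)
Your proposal is correct and follows essentially the same route as the paper: square the queue update using $\max\{a,0\}^2\le a^2$, expand, drop the nonnegative cross term $2E(t)\bar{E}$ to bound $(E(t)-\bar{E})^2$ by $E(t)^2+\bar{E}^2\le E_{\max}^2+\bar{E}^2$, and take the conditional expectation. If anything, your write-up is more explicit than the paper's about the justifications (nonnegativity of $E(t)$ and $\bar{E}$, finiteness of $E_{\max}$).
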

\begin{proof}
\label{pro2}
    According to definition \eqref{efml:virtual_queue}, the following statement holds: $$Q(t+1)^2\leq[Q(t)+E(t)-\bar{E}]^2,$$
    which is equivalent to $$Q(t+1)^2\leq Q(t)^2+\left[E(t)-\bar{E}\right]^2+2Q(t)\left[E(t)-\bar{E}\right].$$
    Take conditional expectation on both sides and then we have
    \begin{align}
	\Delta[Q(t)]&=\frac{1}{2}\mathbb{E}\left\{Q(t+1)^2-Q(t)^2|Q(t)\right\}\leq\frac{1}{2}\mathbb{E}\{[E(t)-\bar{E}]^2+2Q(t)\left[E(t)-\bar{E}\right]|Q(t)\}\notag\\
	&\leq\frac{1}{2}\mathbb{E}\{E(t)^2+\bar{E}^2+2Q(t)[E(t)-\bar{E}]|Q(t)\}\leq\mathcal{B}+Q(t)\mathbb{E}[E(t)-\bar{E}|Q(t)],\notag
    \end{align}
    which is the upper boundary of \textit{Lyapunov drift}.
\end{proof}

After that, the \textit{Lyapunov drift-plus-penalty} function in each slot can be written as
\begin{align}
\label{fml:ineq1}
	\Delta[Q(t)]-\mathcal{V}\cdot\mathbb{E}[V(t)|Q(t)]\leq\mathcal{B}+Q(t)\mathbb{E}[E(t)-\bar{E}|Q(t)]-\mathcal{V}\cdot\mathbb{E}[V(t)|Q(t)].
\end{align} 
% By incorporating the virtual queue stability constraint $C_8$ into caching value model\cite{neely2010stochastic}, the \textit{Lyapunov drift-plus-penalty} function in each slot becomes
% \begin{align}
% 	&\Delta[Q(t)]-\mathcal{V}\cdot\mathbb{E}[V(t)|Q(t)]\notag\\\leq&\mathcal{B}+Q(t)\mathbb{E}[E(t)-\bar{E}|Q(t)]-\mathcal{V}\cdot\mathbb{E}[V(t)|Q(t)].
% \end{align} 
The weight $\mathcal{V}$ is utilized to adjust the trade-off between energy cost minimization and caching value maximization. Notice that $Q(t)\bar{E}$ is a constant at every frame $t$ and does not affect the caching decision-making. To this end, the original stochastic optimization problem $\mathbf{P1}$ can be transformed to the following single-time-slot problem $\mathbf{P2}$. 
\begin{equation}
 	\begin{aligned}
 		(\mathbf{P2}):\quad&\mathop{\min}_{\mathbf{X}(t),\mathbf{Y}(t)}[Q(t)E(t)-\mathcal{V}\cdot V(t)]\\
 		\textrm{s.t.}\quad&C_2-C_5.\notag
 	\end{aligned}
\end{equation}

\begin{algorithm}[!htbp]
	\caption{Online Caching Decision Algorithm (OCDA)}
	\label{alg:online}
	\KwIn{$s_k$,$S_i$,$\mu_i$,$B_i$,$P_i$,$\delta_j$,$\bar{E}$}
	\KwOut{$\mathbf{X}$, $\mathbf{Y}$}
	Initialize $Q(0)\leftarrow0$\;
	\ForEach{$t=0,1,\cdots,T-1$}{
		Predict SD demand $d_{jk}(t)$\;
		Observe $A_{ik}(t)$, $h_{ij}(t)$\;
		Solve the problem $\mathbf{P2}$ to get $\{\mathbf{X}(t),\mathbf{Y}(t)\}$\;
		Update $Q(t+1)$\; 
	}
\end{algorithm}
The online decision algorithm is shown in Alg.\ref{alg:online}. It is worth noting that though the global information over long run is difficult to predict, the short-term prediction for the immediate time slot is available and many accurate enough prediction algorithms have been developed\cite{li2016trend}. Furthermore, the virtual queue stability constraint $C_8$ can be safely omitted due to the following theorem:
\begin{theorem}
\label{the3}
    The solution of $\mathbf{P2}$ satisfies constraint $C_6$.
\end{theorem}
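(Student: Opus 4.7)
The plan is to invoke the standard Lyapunov drift-plus-penalty argument: since the online algorithm solves $\mathbf{P2}$ by minimizing $Q(t)E(t)-\mathcal{V}\cdot V(t)$ at each slot, it in particular minimizes the right-hand side of the drift-plus-penalty bound \eqref{fml:ineq1} over all instantaneously feasible decisions (up to the constant $\mathcal{B}-Q(t)\bar{E}$, which does not depend on $\mathbf{X}(t),\mathbf{Y}(t)$). This lets us compare the OCDA decision to any other admissible reference policy.

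First I would assume the existence of a (possibly stationary randomized) reference policy $\Pi^*$ that satisfies constraints $C_2$--$C_5$ at every slot and whose expected energy consumption satisfies $\mathbb{E}[E^{\Pi^*}(t)]\leq \bar{E}$; this is exactly the feasibility of $(\mathbf{P1})$ and is standard in Lyapunov-opt treatments. Next I would observe that the value function $V(t)$ is bounded above, say by a constant $V_{\max}$, because $A_{ik}(t),F_k(t),H_{jk}(t)$ are bounded in $[0,1]$ and $y_{ijk}(t)\le d_{jk}(t)$ with $d_{jk}(t)$ bounded by modeling assumption. Plugging the reference policy into the inequality of Theorem~\ref{the2} yields
\begin{equation*}
\Delta[Q(t)]-\mathcal{V}\cdot\mathbb{E}[V^{\text{OCDA}}(t)\mid Q(t)]\leq \mathcal{B}+Q(t)\,\mathbb{E}[E^{\Pi^*}(t)-\bar{E}\mid Q(t)]-\mathcal{V}\cdot\mathbb{E}[V^{\Pi^*}(t)\mid Q(t)]\leq \mathcal{B}+\mathcal{V}V_{\max},
\end{equation*}
so $\Delta[Q(t)]\leq \mathcal{B}+2\mathcal{V}V_{\max}$ after moving the OCDA value term (also bounded by $\mathcal{V}V_{\max}$) to the right.

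Then I would take total expectation, telescope $\Delta[Q(t)]=\mathbb{E}[L(Q(t+1))-L(Q(t))]$ over $t=0,\dots,T-1$, and use $Q(0)=0$ to obtain
\begin{equation*}
\tfrac{1}{2}\mathbb{E}[Q(T)^2]\leq T\bigl(\mathcal{B}+2\mathcal{V}V_{\max}\bigr).
\end{equation*}
Applying Jensen's inequality gives $\mathbb{E}[Q(T)]\leq \sqrt{2T(\mathcal{B}+2\mathcal{V}V_{\max})}$, hence $\mathbb{E}[Q(T)]/T=O(1/\sqrt{T})\to 0$, which is precisely $C_6$.

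The main obstacle I expect is justifying the two boundedness ingredients cleanly: (i) the existence of a reference policy whose expected per-slot energy does not exceed $\bar{E}$ (which is really a mild feasibility/Slater-type assumption on the problem data $\bar{E},\mu_i,\delta_j,S_i$), and (ii) a uniform bound on $V(t)$ and on $E(t)$ so that $V_{\max}$ and $E_{\max}$ in Theorem~\ref{the2} are well defined; both follow from the finite horizon of demand intensities and finite storage/transmission capacities, but they should be stated explicitly rather than taken for granted. Once these are in place, the remainder is a routine telescoping-plus-Jensen argument.
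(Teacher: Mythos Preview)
Your proposal is correct and follows essentially the same Lyapunov drift-plus-penalty route as the paper: compare the OCDA decision to a stationary reference (what the paper calls an ``observation-only'' policy), bound the one-slot drift by a constant independent of $Q(t)$, telescope, and apply $\mathbb{E}[Q(T)]^2\le\mathbb{E}[Q(T)^2]=O(T)$ (the paper calls this Cauchy--Schwarz, you call it Jensen). The only substantive difference is that the paper assumes a Slater-type slack $\mathbb{E}[E^\dag(t)-\bar E]\le -\varepsilon$ for some $\varepsilon>0$, which introduces a $-\varepsilon Q(t)$ term in the drift bound; however, that term is immediately dropped in establishing $C_6$ (it is only needed later for the average-backlog bound in Theorem~\ref{the4}), so your weaker feasibility assumption $\mathbb{E}[E^{\Pi^*}(t)]\le\bar E$ is sufficient here and yields a slightly cleaner argument.
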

\begin{proof}
\label{pro3}
    We consider a observation-only policy for $\mathbf{P2}$ that makes caching decision $[\mathbf{X}(t),\mathbf{Y}(t)]$ via a predefined probability $\mathcal{P}$ given observation $[d_{jk}(t),A_{ik}(t),h_{ij}(t)]$, i.e.
    $$[\mathbf{X}^\dag(t),\mathbf{Y}^\dag(t)]=\arg\max\mathcal{P}\left[\mathbf{X}(t),\mathbf{Y}(t)|d_{jk}(t),A_{ik}(t),h_{ij}(t)\right].$$
    Let $E^\dag(t)$ and $V^\dag(t)$ denote the corresponding energy consumption and obtained value based on the observation-only policy, respectively. We suppose the solution $[\mathbf{X}^\dag(t),~\mathbf{Y}^\dag(t)]$ satisfies
    \begin{equation}
    \label{fml:assume1}
        \exists\varepsilon>0,\mathbb{E}\left[E^\dag(t)-\bar{E}\right]\leq-\varepsilon.
    \end{equation}
    Based on the inequality \eqref{fml:ineq1}, we have
    \begin{align}
        \Delta[Q(t)]-\mathcal{V}\cdot\mathbb{E}[V(t)|Q(t)]&\leq\mathcal{B}+Q(t)\mathbb{E}[E(t)-\bar{E}|Q(t)]-\mathcal{V}\cdot\mathbb{E}[V(t)|Q(t)]\notag\\
        &\leq\mathcal{B}+Q(t)\mathbb{E}[E^\dag(t)-\bar{E}|Q(t)]-\mathcal{V}\cdot\mathbb{E}[V^\dag(t)|Q(t)]\label{fml:ineq2}\\
        &=\mathcal{B}+Q(t)\mathbb{E}[E^\dag(t)-\bar{E}]-\mathcal{V}\cdot\mathbb{E}[V^\dag(t)]\label{fml:eq1}\\
        &\leq\mathcal{B}-\varepsilon\cdot Q(t)-\mathcal{V}\cdot\mathbb{E}[V^\dag(t)].\label{fml:ineq3}
    \end{align}
    The inequality \eqref{fml:ineq2} is because the \textit{drift-plus-penalty} provides the optimal solution of $\mathbf{P2}$ and thus overcomes decisions derived from the observation-only policy. The equality \eqref{fml:eq1} holds since $\mathbf{X}^\dag(t)$ and $\mathbf{Y}^\dag(t)$ are only determined by observation $[d_{jk}(t),A_{ik}(t),h_{ij}(t)]$ and thus independent of $Q(t)$. The inequality \eqref{fml:ineq3} is because the assumption \eqref{fml:assume1}.

    Considering $V(t)=\sum_{i}\sum_{j}\sum_{k}A_{ik}(t)F_k(t)H_{jk}(t)\cdot y_{ijk}(t)$, where $A_{ik}(t)\in\{0,1\}$, $F_k(t),H_{jk}(t)\in[0,1]$ and $y_{ijk}(t)\leq d_{jk}(t)$, we can find the boundary of $V(t)$, i.e. $$V_{min}\leq V(t)\leq V_{max},\forall t\in\mathcal{T},$$ where $V_{min}=0$ and $V_{max}=\max_{t\in\mathcal{T}}\sum_j\sum_kd_{jk}(t)$ are constant. Then the inequality \eqref{fml:ineq3} can be rewritten as
    \begin{equation}
        \Delta[Q(t)]-\mathcal{V}\cdot V_{max}\leq\mathcal{B}-\varepsilon\cdot Q(t)-\mathcal{V}\cdot V_{min}.\notag
    \end{equation}
    Take expectation on both side, we have
    \begin{equation}
        \mathbb{E}\left[\Delta\left(Q(t)\right)\right]=\frac{1}{2}\mathbb{E}\left[Q(t+1)^2\right]-\mathbb{E}\left[Q(t)^2\right]\leq\mathcal{B}+\mathcal{V}(V_{max}-V_{min})-\varepsilon\cdot\mathbb{E}\left[Q(t)\right].\notag
    \end{equation}
    By summing over $t\in\{0,1,\cdots,T-1\}$, the above inequality becomes
    \begin{equation}
    \label{fml:ineq7}
        \mathbb{E}\left[Q(T)^2\right]\leq 2T\mathcal{B}+2T\mathcal{V}(V_{max}-V_{min})-2\varepsilon\cdot\sum_{t=0}^{T-1}\mathbb{E}\left[Q(t)\right]\leq2T\mathcal{B}+2T\mathcal{V}(V_{max}-V_{min}).
    \end{equation}
    According to Cauchy-Schwarz Inequality that $\left[\mathbb{E}(XY)\right]^2\leq\mathbb{E}(X^2)\mathbb{E}(Y^2)$, we have
    \begin{equation}
        \left[\mathbb{E}(Q(T))\right]^2\leq\mathbb{E}\left[Q(T)^2\right]\leq2T\mathcal{B}+2T\mathcal{V}(V_{max}-V_{min}).\notag
    \end{equation}
    Then we can obtain the following inequality
    \begin{equation}
        \lim_{T\rightarrow\infty}\frac{\mathbb{E}\left[Q(T)\right]}{T}\leq\lim_{T\rightarrow\infty}\sqrt{\frac{2\mathcal{B}+2\mathcal{V}(V_{max}-V_{min})}{T}}=0.\notag
    \end{equation}
    Therefore, constraint $C_6$ is guaranteed.
\end{proof}

The objective function of $\mathbf{P2}$ represents the supreme bound of $\mathbf{P1}$. By tuning $\mathcal{V}$, the online decision algorithm makes a $[O(1/\mathcal{V}),O(\mathcal{V})]$ trade-off between caching value $V(t)$ and energy consumption $E(t)$, which is presented in \textbf{Theorem~\ref{the4}}. Therefore, a larger $\mathcal{V}$ will achieve a good caching performance, with the cost of virtual queue $Q(t)$ instability. 

\begin{theorem}
\label{the4}
    For any non-negative $\mathcal{V}$, the solution of $\mathbf{P2}$ satisfies that
    \begin{align}
    \label{fml:ineq4}
        \lim_{T\rightarrow\infty}\frac{1}{T}\sum_{t=0}^{T-1}\mathbb{E}\left[V^\star(t)-V(t)\right]\leq\frac{\mathcal{B}}{\mathcal{V}};\\
    \label{fml:ineq5}
        \lim_{T\rightarrow\infty}\frac{1}{T}\sum_{t=0}^{T-1}\mathbb{E}\left[Q(t)\right]\leq\frac{\mathcal{B}+\mathcal{V}(V_{max}-V_{min})}{\varepsilon},
    \end{align}
    where $V^\star(t)$ is the value obtained by solving $\mathbf{P1}$; $V(t)$ and $Q(t)$ are the value and queue's backlog of $\mathbf{P2}$.
\end{theorem}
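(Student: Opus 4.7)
The plan is to build on the drift-plus-penalty inequality \eqref{fml:ineq1} established for Theorem~\ref{the2}, exploiting the fact that at each slot the solution of $\mathbf{P2}$ minimizes its RHS over all feasible caching/allocation decisions. For each of the two bounds I will substitute a carefully chosen reference policy into that RHS, take expectations, sum over $t=0,\dots,T-1$, telescope the Lyapunov function, drop non-negative terms, and divide by $T$ before sending $T\to\infty$.

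For inequality \eqref{fml:ineq4}, I take as reference an optimal stationary randomized policy that attains $V^\star(t)$ and satisfies $\mathbb{E}[E^\star(t)-\bar E]\le 0$ (guaranteed since $\mathbf{P1}$ is feasible under $C_1$). Because $\mathbf{P2}$ minimizes the drift-plus-penalty bound, I get
\[
\Delta[Q(t)]-\mathcal V\cdot\mathbb{E}[V(t)\mid Q(t)] \;\le\; \mathcal B + Q(t)\,\mathbb{E}[E^\star(t)-\bar E] - \mathcal V\cdot\mathbb{E}[V^\star(t)] \;\le\; \mathcal B - \mathcal V\cdot\mathbb{E}[V^\star(t)].
\]
Taking expectation, summing over $t$, using the telescoping identity $\sum_{t=0}^{T-1}\mathbb{E}[\Delta Q(t)]=\mathbb{E}[L(Q(T))]-\mathbb{E}[L(Q(0))]\ge -\mathbb{E}[L(Q(0))]$, rearranging, then dividing by $T\mathcal V$ and letting $T\to\infty$ will deliver the $\mathcal B/\mathcal V$ value-gap bound, since the initial-queue term vanishes as $1/T$.

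For inequality \eqref{fml:ineq5}, I reuse the observation-only policy from the proof of Theorem~\ref{the3}, for which assumption~\eqref{fml:assume1} gives $\mathbb{E}[E^\dagger(t)-\bar E]\le -\varepsilon$. The same optimality-of-$\mathbf{P2}$ step, together with the uniform bound $V(t)-V^\dagger(t)\le V_{max}-V_{min}$ proved there, yields
\[
\Delta[Q(t)] \;\le\; \mathcal B - \varepsilon\cdot Q(t) + \mathcal V(V_{max}-V_{min}).
\]
Taking expectation, summing over $t=0,\dots,T-1$, discarding $\mathbb{E}[L(Q(T))]\ge 0$, isolating $\sum_{t}\mathbb{E}[Q(t)]$, dividing by $T\varepsilon$ and letting $T\to\infty$ will produce the stated time-average queue bound.

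The main obstacle is the first bound: the argument tacitly requires the existence of a stationary randomized policy that simultaneously achieves $V^\star$ and meets $C_1$ in expectation. This is standard in Neely's Lyapunov framework for i.i.d.\ or ergodic network states, but needs a brief justification that the slow-variation assumption made in Section~\ref{sec:model} places our setting in that regime; once that is granted, the remainder of the proof is routine bookkeeping on the drift, with the telescoping and the non-negativity of $L(Q(\cdot))$ doing all the work.
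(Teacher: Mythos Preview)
Your proposal is correct and follows essentially the same route as the paper: for \eqref{fml:ineq4} the paper also invokes an optimal observation-only policy with $\mathbb{E}[E^\star(t)-\bar E]\le 0$, substitutes it into the RHS of \eqref{fml:ineq1}, telescopes, and divides by $T\mathcal V$; for \eqref{fml:ineq5} it simply cites inequality \eqref{fml:ineq7} from the proof of Theorem~\ref{the3}---which is exactly the summed version of your displayed drift bound---and rearranges. Your caveat about needing the existence of such a stationary randomized optimal policy matches the paper's own handling, which just asserts it with a reference to Neely's monograph.
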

\begin{proof}
\label{pro4}
    Based on the observation-only policy introduced in aforementioned \textit{Proof} of \textbf{Theorem}~\ref{the3}, we further give an optimal-observation-only policy $[\mathbf{X}^\star(t),\mathbf{Y}^\star(t)]$ which is the optimal solution of $\mathbf{P1}$\cite{neely2010stochastic}.  Let $E^\star(t)$ and $V^\star(t)$ denote the corresponding energy consumption and obtained value based on the observation-only policy, respectively. Moreover, we assume $\mathbb{E}[E^\star(t)-\bar{E}]\leq0$.

    According to the inequality \eqref{fml:ineq1}, we have
    \begin{align}
        \Delta[Q(t)]-\mathcal{V}\cdot\mathbb{E}[V(t)|Q(t)]&\leq\mathcal{B}+Q(t)\mathbb{E}[E(t)-\bar{E}|Q(t)]-\mathcal{V}\cdot\mathbb{E}[V(t)|Q(t)]\notag\\
        &\leq\mathcal{B}+Q(t)\mathbb{E}[E^\star(t)-\bar{E}|Q(t)]-\mathcal{V}\cdot\mathbb{E}[V^\star(t)|Q(t)].\label{fml:ineq6}
    \end{align}
    The inequality \eqref{fml:ineq6} is because the optimal solution of $\mathbf{P1}$ may not be the optimal solution of $\mathbf{P2}$. Then we take expectation on both sides and use the law of iterated expectation
    \begin{equation}
        \mathbb{E}\left[L(Q(t+1))\right]-\mathbb{E}\left[L(Q(t))\right]-\mathcal{V}\cdot\mathbb{E}[V(t)]\leq \mathcal{B}+\mathbb{E}[Q(t)]\mathbb{E}[E^\star(t)-\bar{E}]-\mathcal{V}\cdot\mathbb{E}[V^\star(t)].\notag
    \end{equation}
    By summing over $t\in\{0,1,\cdots,T_1\}$, we obtain that
    \begin{equation}
         \mathbb{E}\left[L(Q(T))\right]-\mathcal{V}\cdot\sum_{t=0}^{T-1}\mathbb{E}[V(t)]\leq \mathcal{B}\cdot T+\sum_{t=0}^{T-1}\mathbb{E}[Q(t)]\mathbb{E}[E^\star(t)-\bar{E}]-\mathcal{V}\cdot\sum_{t=0}^{T-1}\mathbb{E}[V^\star(t)].\notag
    \end{equation}
    Divide $T$ and take limit as $T$ approaches infinity, then we have
    \begin{align}
        \lim_{T\rightarrow\infty}\frac{1}{T}\sum_{t=0}^{T-1}\mathbb{E}\left[V^\star(t)-V(t)\right]\leq\frac{\mathcal{B}}{\mathcal{V}}+\frac{1}{\mathcal{V}}\lim_{T\rightarrow\infty}\frac{1}{T}\sum_{t=0}^{T-1}\mathbb{E}[Q(t)]\mathbb{E}[E^\star(t)-\bar{E}]- \lim_{T\rightarrow\infty}\frac{\mathbb{E}\left[L(Q(T))\right]}{T}\leq\frac{\mathcal{B}}{\mathcal{V}}.\notag
    \end{align}
    Therefore, the inequality \eqref{fml:ineq4} is proved.

    Regarding inequality \eqref{fml:ineq5}, we start from \eqref{fml:ineq7} as copied below:
    $$\mathbb{E}\left[Q(T)^2\right]\leq 2T\mathcal{B}+2T\mathcal{V}(V_{max}-V_{min})-2\varepsilon\cdot\sum_{t=0}^{T-1}\mathbb{E}\left[Q(t)\right].$$
    Then we rearrange the above inequality as
    \begin{equation}
        \sum_{t=0}^{T-1}\mathbb{E}\left[Q(t)\right]\leq\frac{T\mathcal{B}}{\varepsilon}+\frac{T\mathcal{V}}{\varepsilon}(V_{max}-V_{min})-\frac{\mathbb{E}\left[Q(T)^2\right]}{2\varepsilon}\leq\frac{T\mathcal{B}}{\varepsilon}+\frac{T\mathcal{V}}{\varepsilon}(V_{max}-V_{min}).\notag
    \end{equation}
    As a result, we have
    \begin{equation}
        \lim_{T\rightarrow\infty}\frac{1}{T}\sum_{t=0}^{T-1}\mathbb{E}\left[Q(t)\right]\leq\frac{\mathcal{B}+\mathcal{V}(V_{max}-V_{min})}{\varepsilon}.\notag
    \end{equation}
    The inequality \eqref{fml:ineq5} is proved.
\end{proof}

As for the time complexity, $\mathbf{P2}$ is NP-hard. The proof can be established by setting infinite service capacity and delay tolerance in a network with only one RSU. Then $\mathbf{P2}$ can be reduced to 0-1 knapsack problem, where the latter is a well-known NP-hard problem. To tackle this challenge, we apply a heuristic algorithm in the subsection~\ref{sec:heu}.

\subsection{Model Linearization}
Notice that constraint $C_2$ contains nonlinear parts with the decision variable $y_{ijk}(t)$ in the denominator. In this subsection, some linearization tricks are applied to transform the previous optimization problem $\mathbf{P2}$ into an integer linear programming model, and then some mature solvers can be used to provide the optimal solution. 

Given prediction or observation at beginning of each time slot, the expectation in $C_2$ can be estimated via Monte Carlo method. Thus we focus on the linearization of $D_j$, and more specifically, on $L_i$ and $L_0$. Here we deal with $\mathbf{P2}$ which is oriented towards the current time slot $t$, so $t$ is omitted for simplification in this subsection.

We view $L_i$ as a decision variable, which is calculated as
$$L_i=\frac{1}{\mu_i-\sum_{j\in\mathcal{J}_i}\sum_{k\in\mathcal{K}}y_{ijk}}.$$
This is equal to the constraints below:
\begin{align}
    &L_i>0,\forall i\in\mathcal{I};\\
    &\mu_i\cdot L_i-\sum_{j\in\mathcal{J}_i}\sum_{k\in\mathcal{K}}y_{ijk}\cdot L_i=1,\forall i\in\mathcal{I}.\label{fml:con1}
\end{align}
There is a product of two decision variables in \eqref{fml:con1}, where $y_{ijk}$ is an integer variable and $L_i$ is a continuous variable. Notice $C_5$ limits the upper bound of $y_{ijk}$ and thus we have $y_{ijk}\in\{0,1,\cdots,d_{jk}\}$. Then $y_{ijk}$ is replaced by 
\begin{equation}
    y_{ijk}=\sum_{\theta=0}^{d_{jk}}\theta\cdot\chi_{ijk},
\end{equation}
where $\chi_{ijk}$ is a binary variable. Then \eqref{fml:con1} becomes
\begin{equation}
\label{fml:con2}
\mu_i\cdot L_i-\sum_{j\in\mathcal{J}_i}\sum_{k\in\mathcal{K}}\sum_{\theta=0}^{d_{jk}}\theta(\chi_{ijk}\cdot L_i)=1,\forall i\in\mathcal{I}.
\end{equation}
We rewrite \eqref{fml:con2} in terms of $\xi_{ijk}$ where
\begin{equation}
\label{fml:def1}
\xi_{ijk}=\chi_{ijk}\cdot L_i=
\begin{cases}
L_i,  &\text{if $\chi_{ijk}=1$;}  \\
0,   &\text{otherwise.}
\end{cases}
\end{equation}
And constraints for $\xi_{ijk}$:
\begin{align}
\label{fml:con3}
    &\xi_{ijk}\leq L_i, \forall i\in\mathcal{I},j\in\mathcal{J},k\in\mathcal{K};\\
    &\xi_{ijk}\leq M\cdot\chi_{ijk}, \forall i\in\mathcal{I},j\in\mathcal{J},k\in\mathcal{K};\\
    &\xi_{ijk}\geq M\cdot(\chi_{ijk}-1)+Li, \forall i\in\mathcal{I},j\in\mathcal{J},k\in\mathcal{K};\\
\label{fml:con4}
    &\xi_{ijk}\geq 0, \forall i\in\mathcal{I},j\in\mathcal{J},k\in\mathcal{K}.
\end{align}
The aforementioned constraints \eqref{fml:con3}$\sim$\eqref{fml:con4} are equivalent to \eqref{fml:def1} by examining the table below:
\begin{table}[htbp]
 \caption{Possible Combination of Constraints \eqref{fml:con3}$\sim$\eqref{fml:con4}}
  \centering
  \begin{tabular}{ccccc}
    \toprule
    $\chi_{ijk}$ & $L_i$ & $\chi_{ijk}\cdot L_i$ & \textbf{Constraints} & \textbf{Imply} \\
    \midrule
    $0$ & $L_i$ & $0$ & $\xi_{ijk}\leq L_i$ & $\xi_{ijk}=0$\\
    & & & $\xi_{ijk}\leq 0$ & \\
    & & & $\xi_{ijk}\geq -M+Li$ & \\
    & & & $\xi_{ijk}\geq 0$ & \\
    \midrule
    $1$ & $L_i$ & $L_i$ & $\xi_{ijk}\leq L_i$ & $\xi_{ijk}=L_i$\\
    & & & $\xi_{ijk}\leq M$ & \\
    & & & $\xi_{ijk}\geq Li$ & \\
    & & & $\xi_{ijk}\geq 0$ & \\
    \bottomrule
  \end{tabular}
\end{table}

The linearization of $L_0$ follows a similar trick as mentioned above. Here we omit it for brevity.

\subsection{Heuristic Algorithm}
\label{sec:heu}
As aforementioned, solving $(\mathbf{P2})$ becomes increasingly challenging within a reasonable time as network size grows. Therefore we apply Binary Quantum-behaved Particle Swarm Optimization (BQPSO) to expedite the decision-making process.
As a famous variant of particle swarm optimization (PSO), the quantum-behaved PSO (QPSO) keeps the philosophy of PSO and applies statistical simulation for approximating the global optimal solution in quantum space. To cope with discrete space, the binary QPSO (BQPSO) algorithm is introduced via mapping strategy and discretization mechanism, which has distinct superiority in convergence and performance~\cite{li2023improved}. 
% Furthermore, by introducing the Gaussian distribution into the mutation operation, a.k.a. GQPSO algorithm, the premature convergence can be avoided efficiently~\cite{coelho2010gaussian}. 
Given the proposed optimization problem $\mathbf{P2}$, we first transform it into an unconstrained model via penalty method as follows:
\begin{equation}
	\label{fml:fitness}
	F(\mathbf{X},\mathbf{Y})=Obj(\mathbf{X},\mathbf{Y})+\gamma\cdot Pen(\mathbf{X},\mathbf{Y}),
\end{equation}
where $Obj$ is the objective function in $\mathbf{P2}$, $Pen$ is the penalty function that is converted from constraints $C_2-C_5$, and $\gamma$ is the penalty factor. Here we deal with the caching allocation for each time slot, thus $t$ is omitted from variables for simplification. Note that the decision variables $\mathbf{X},\mathbf{Y}$ may generate an infeasible allocation during BQPSO processing. To improve searching efficiency, we set $\mathbf{X}$ as the pivot variable and then $\mathbf{Y}$ can be allocated equally to the available RSUs:
\begin{equation}
    \label{fml:x2y}
    y_{ijk}=\frac{d_{jk}}{\sum_{i\in\mathcal{I}_j}x_{ik}}
\end{equation}
% Therefore, we set $\mathbf{Y}$ as the pivot variable, and $\mathbf{X}$ can be determined via an indicator function $\mathbb{I}(\cdot)$ that returns $1$ when the variable is true:
% \begin{equation}
% 	\label{fml:y2x}
% 	x_{ik}=\mathbb{I}(\sum_{j\in\mathcal{J}_i}y_{ijk}).
% \end{equation} 

Let $\tilde{N}$ be the number of particles, $\tilde{D}$ be the dimension of $\mathbf{X}$ which equals to $I\times K$ in this paper, $\mathbf{\tilde{X}}_n$ be the location of $n^{\text{th}}$ particle, $\mathbf{\tilde{P}}_n$ be the historical individual best location, and $\mathbf{\tilde{P}}^*$ be the recorded global best position. In particular, $\mathbf{\tilde{X}}_n$ is the vectorization of $\mathbf{X}$. In BQPSO, the continuous local attractor $\mathbf{\tilde{L}}^\aleph$ at iteration $\tilde{t}$ is calculated as
$$
	\tilde{L}^\aleph_{nd}(\tilde{t})=\phi\cdot\tilde{P}_{nd}(\tilde{t})+(1-\phi)\cdot \tilde{P}_{d}^*(\tilde{t}),
$$
and then $\tilde{L}^\aleph_{nd}(\tilde{t})$ is mapped to binary local attractor $\tilde{L}_{nd}(\tilde{t})$ via
\begin{equation}
\label{fml:attractor_update}
    \tilde{L}_{nd}(\tilde{t})=
    \begin{cases}
        1,  &\text{if $\psi<(1+e^{-\tilde{L}^\aleph_{nd}(\tilde{t})})^{-1}$;}  \\
        0,   &\text{otherwise.}
    \end{cases}
\end{equation}
The continuous position of particle $n$ at iteration $\tilde{t}$ is updated by
\begin{equation}
	\label{fml:continuous_update}
	\tilde{X}^\aleph_{nd}(\tilde{t}\!+\!1)\!=\!\tilde{L}^\aleph_{nd}(\tilde{t})\!\pm\!\eta\left|\frac{1}{\tilde{N}}\sum_{n=1}^{\tilde{N}}\!\tilde{P}_{nd}(\tilde{t})\!-\!\tilde{X}^\aleph_{nd}(\tilde{t})\right|\log\!\left(\frac{1}{G}\right)
\end{equation}
and the discrete position is updated through
\begin{equation}
    \label{fml:discrete_update}
    \tilde{X}_{nd}(\tilde{t})=
    \begin{cases}
        1,  &\text{if $\tilde{X}^\aleph_{nd}(\tilde{t})\geq\cfrac{1}{\tilde{D}}\sum_{d=1}^{\tilde{D}}\tilde{X}^\aleph_{nd}(\tilde{t})$;}  \\
        0,   &\text{otherwise.}
    \end{cases}
\end{equation}
where $\phi$, $\psi$ and $G$ are generated based on the uniform distribution $[0,1]$. $\eta$ is the contraction expansion factor. In this paper, an adaptive $\eta$ is employed where $$\eta(t)=\omega_1+\frac{(\omega_2-\omega_1)(\tilde{T}-\tilde{t})}{\tilde{T}}.$$ Furthermore, the $+$ or $-$ in \eqref{fml:continuous_update} is selected randomly. 
% To restore integer, we limit the new position of particle $n$ as 
% \begin{equation}
% 	\label{fml:floor}
% 	\tilde{Y}_{nd}(\tilde{t}+1)=\max\{0,\lfloor\tilde{Y}_{nd}(\tilde{t}+1)\rfloor\}
% \end{equation} 
The details of BQPSO framework is depicted in Alg.\ref{alg:GQPSO}. Obviously, the time complexity is $O(\tilde{N}\cdot\tilde{T})$, which depends on population size $\tilde{N}$ and maximum iteration number $\tilde{T}$.
\begin{algorithm}[!htbp]
	\caption{BQPSO-based Decision Algorithm (BQPSO-DA)}
	\label{alg:GQPSO}
	\KwIn{Variables in Alg.\ref{alg:online}; $\omega_1$, $\omega_2$}
	Initialize $\mathbf{\tilde{X}}$ and evaluate $F$ via \eqref{fml:fitness}\;
	Record current $\mathbf{\tilde{P}}$ and $\mathbf{\tilde{P}}^*$\;
	\While{not exceed maximum number of iterations $\tilde{T}$}{
		\ForEach{individual particles $n$}{
			Update $\mathbf{\tilde{X}}^\aleph_n$ through \eqref{fml:continuous_update}\;
			Discrete $\mathbf{\tilde{X}}_n$ through \eqref{fml:discrete_update}\;
            Calculate $\mathbf{\tilde{L}}_{n}$ through \eqref{fml:attractor_update}\;
            \If{Hamming distance between $\mathbf{\tilde{X}}_n$ and $\mathbf{\tilde{L}}_{n}$ $\geq\tilde{N}/2$}{
                Perform single-point crossover operation\;}
			Evaluate $F$ via \eqref{fml:fitness} and update $\mathbf{\tilde{P}}_n$\;
		}	
		Update $\mathbf{\tilde{P}}^*$ and iteration counter\;
	}
	Convert $\mathbf{\tilde{P}}^*$ into $\mathbf{X}$ and determine $\mathbf{Y}$ via \eqref{fml:x2y}\;
\end{algorithm}
\section{Numerical Investigations}
\label{sec:simulation}
\subsection{Simulation Setup}

To evaluate the performance of proposed algorithms, we carry out simulations within an area of 800m$\times$500m consisting of $1$ horizontal and $2$ vertical bidirectional streets, which is divided into $16$ regions equally. The simulation incorporates $6$ RSUs, with each RSU responsible for serving between $5$ to $8$ regions.
Notice that RSUs are densely deployment and thus there are overlaps in service areas. We designate that each region is associated with $5$ distinct SD instances to reflect region status. The requested SD size ranges from $1$Mb to $10$Mb. The lifespan of each SD varies from $1$s to $100$s. For the sake of simplicity, the space validity of each SD is fixed to a 100m radius. The caching capacity of each RSU follows a uniform distribution of $[100,500]$Mb. Additionally, the delay tolerance for each region is set to $500$ms and the caching power coefficient is $2.5\times10^{-9}$W/b. 
As for the communication process, the downlink channel gain is calculated by the following path-loss model $P_L(d)=\bar{P}_L(d_0)+10\gamma\log(d/d_0)+X_\sigma$, where $d$ is the average distance between region and RSU, $d_0=1$km is the reference distance and the mean path-loss at reference point $\bar{P}_L(d_0)$ is set to 28dB, $\gamma=2$ is the path-loss exponent in free space, shadowing effect $X_\sigma$ is modeled as a log-normal distribution with 8dB standard deviation and zero mean. The transmission power is set to $30$dBm and the bandwidth is $20$MHz. Additionally, the noise power spectral density is $-174$dBm/Hz. 
Regarding the GQPSO in Alg.\ref{alg:GQPSO}, we set both the population size and the maximum number of iterations as $100$, $\omega_1$ and $\omega_2$ for calculating contraction factor as $0.5$ and $1$ respectively.

We compare the proposed schemes against some well-known algorithms such as Greedy Caching and Random Caching, which have been introduced in \cite{liu2021cooperative} and \cite{chen2017probabilistic} respectively. One proposed method is Alg.\ref{alg:online} solving linearized $\mathbf{P2}$ via branch-and-bound method and we use term OCDA to represent it hereafter.
The other proposed approach is solving $\mathbf{P2}$ through BQPSO-DA. Regarding to Greedy Caching, SD will be cached and updated based on the caching value $V(t)$ and RSU capacity. For each requested SD, Random Caching chooses positions randomly among RSUs which cover the relevant regions and updates contents in a FIFO manner. Drawing from congestion definition as reported in  \cite{China2020congestion}, we conduct our simulations under four distinct traffic scenarios: no congestion, light congestion, moderate congestion, and heavy congestion, which correspond to vehicle counts of 6, 8, 10, and 12 per region, respectively. All results in this section are compared over $1800$ time slots and $\tau=1$s.

\subsection{Performance Comparison}

\begin{figure}[htbp]
\centering
\begin{minipage}{.33\linewidth}
\centering
\includegraphics[width=\linewidth,trim=0 18em 0 0]{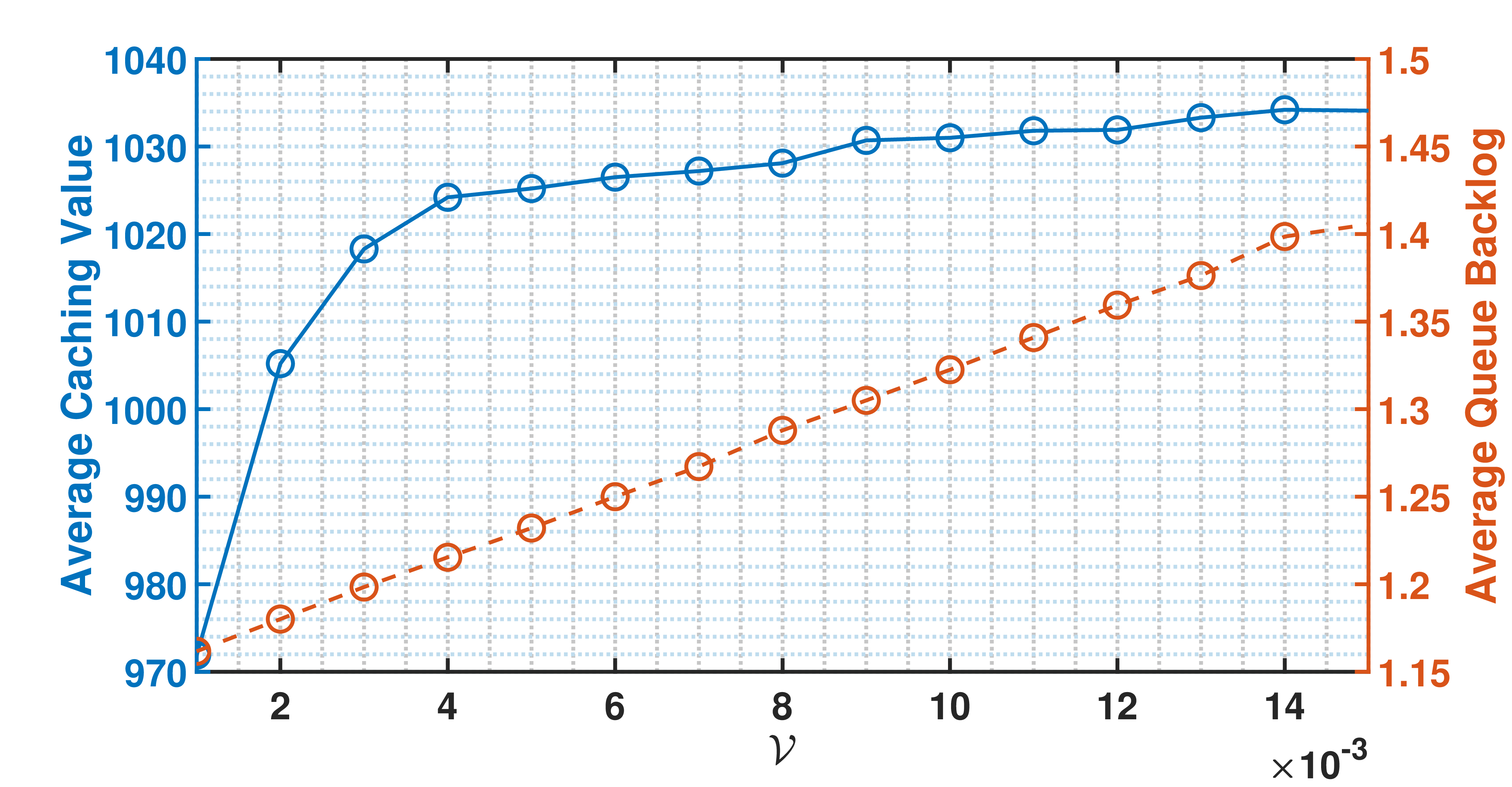}
\caption{Impact of $\mathcal{V}$}
\label{fig:V}
\end{minipage}
\begin{minipage}{.33\linewidth}
\centering
\includegraphics[width=\linewidth,trim=0 18em 0 0]{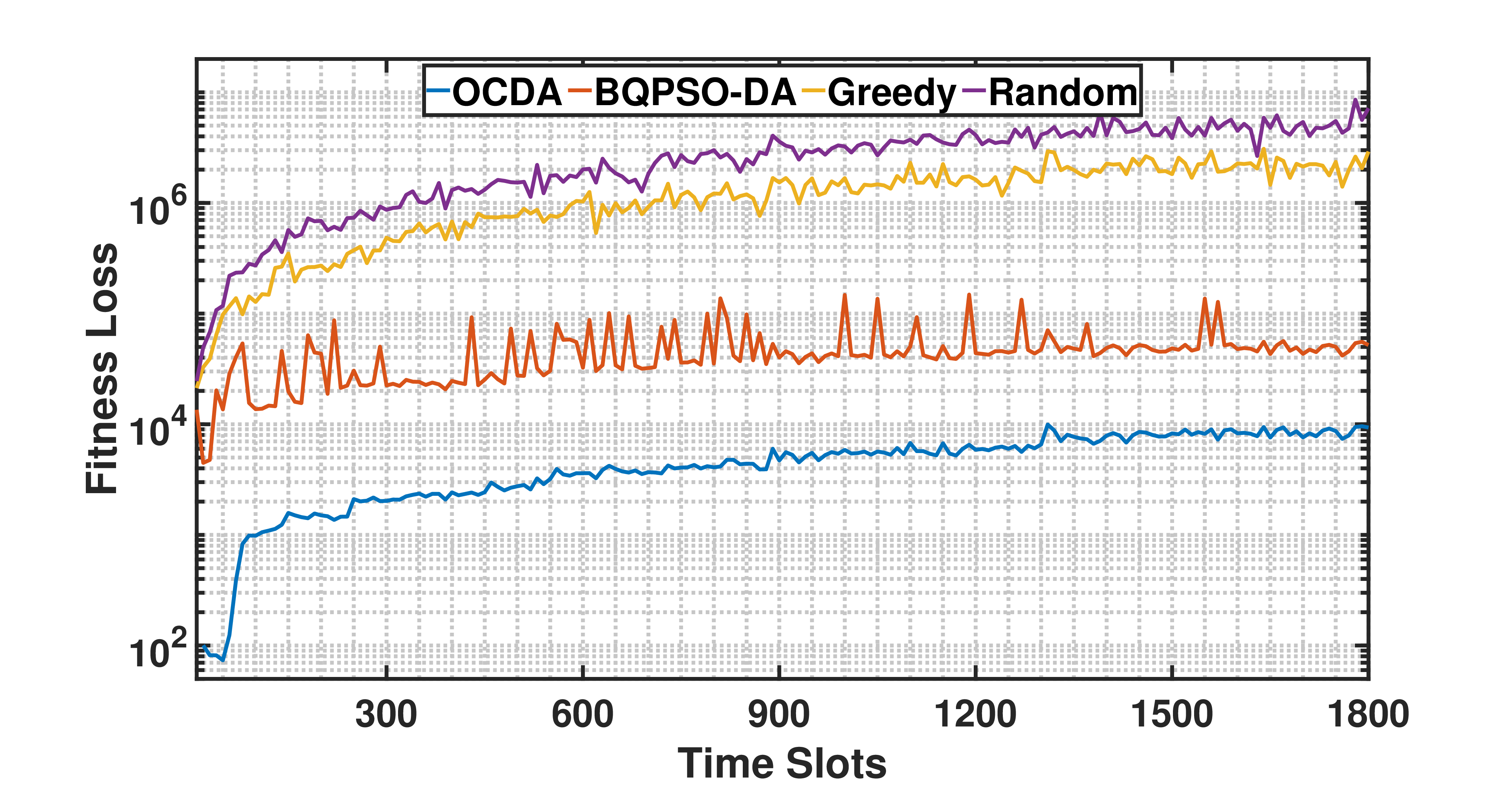}
\caption{Fitness Loss $F(\mathbf{X},\mathbf{Y})$}
\label{fig:fitness}
\end{minipage}
\begin{minipage}{.33\linewidth}
\centering
\includegraphics[width=\linewidth,trim=0 18em 0 0]{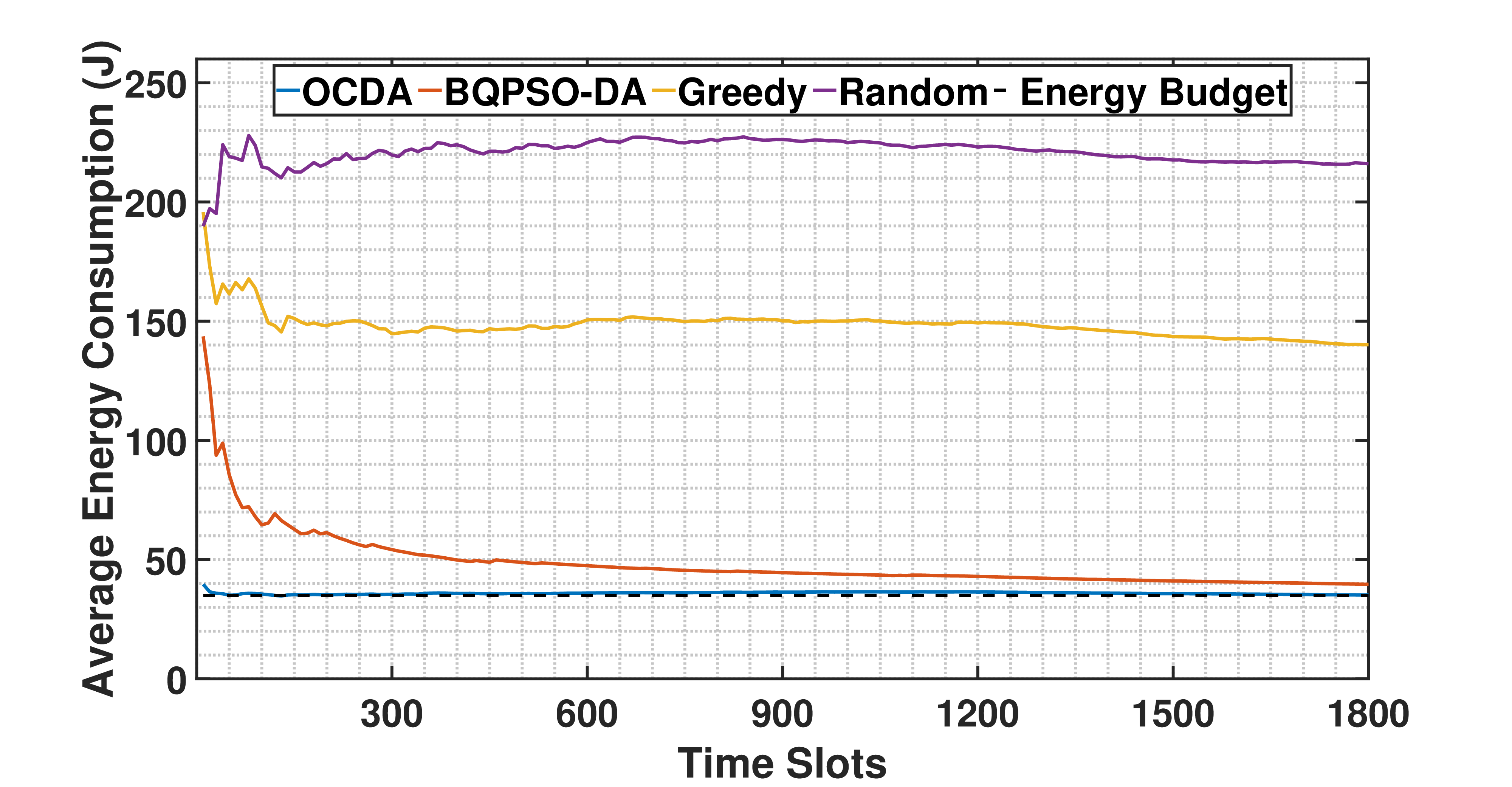}
\caption{Time Average Energy}
\label{fig:energy_mean}
\end{minipage}

\begin{minipage}{.33\linewidth}
\centering
\includegraphics[width=\linewidth,trim=0 18em 0 0]{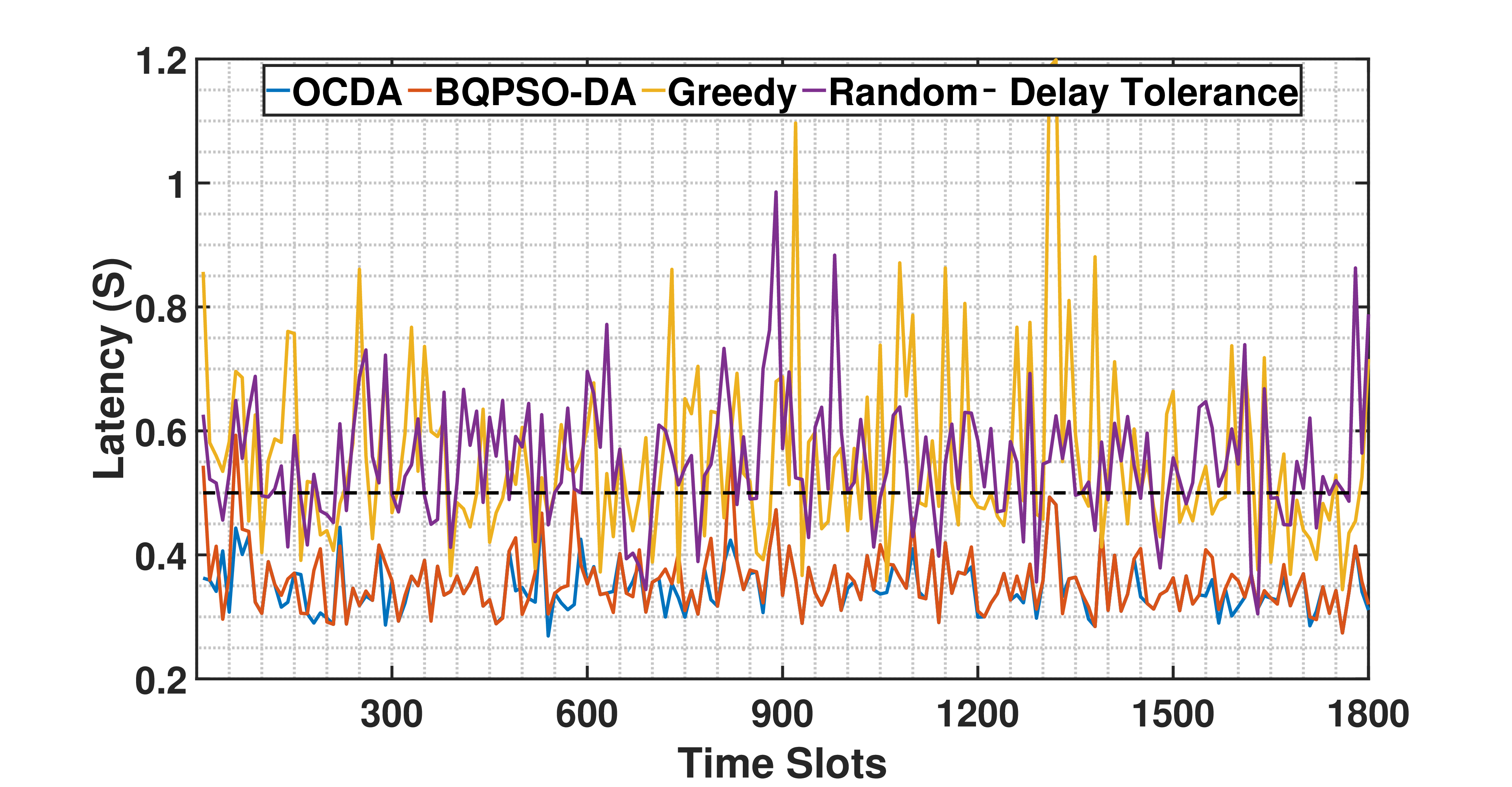}
\caption{Maximum Delay in Regions}
\label{fig:delay}
\end{minipage}
\begin{minipage}{.33\linewidth}
\centering
\includegraphics[width=\linewidth,trim=0 18em 0 0]{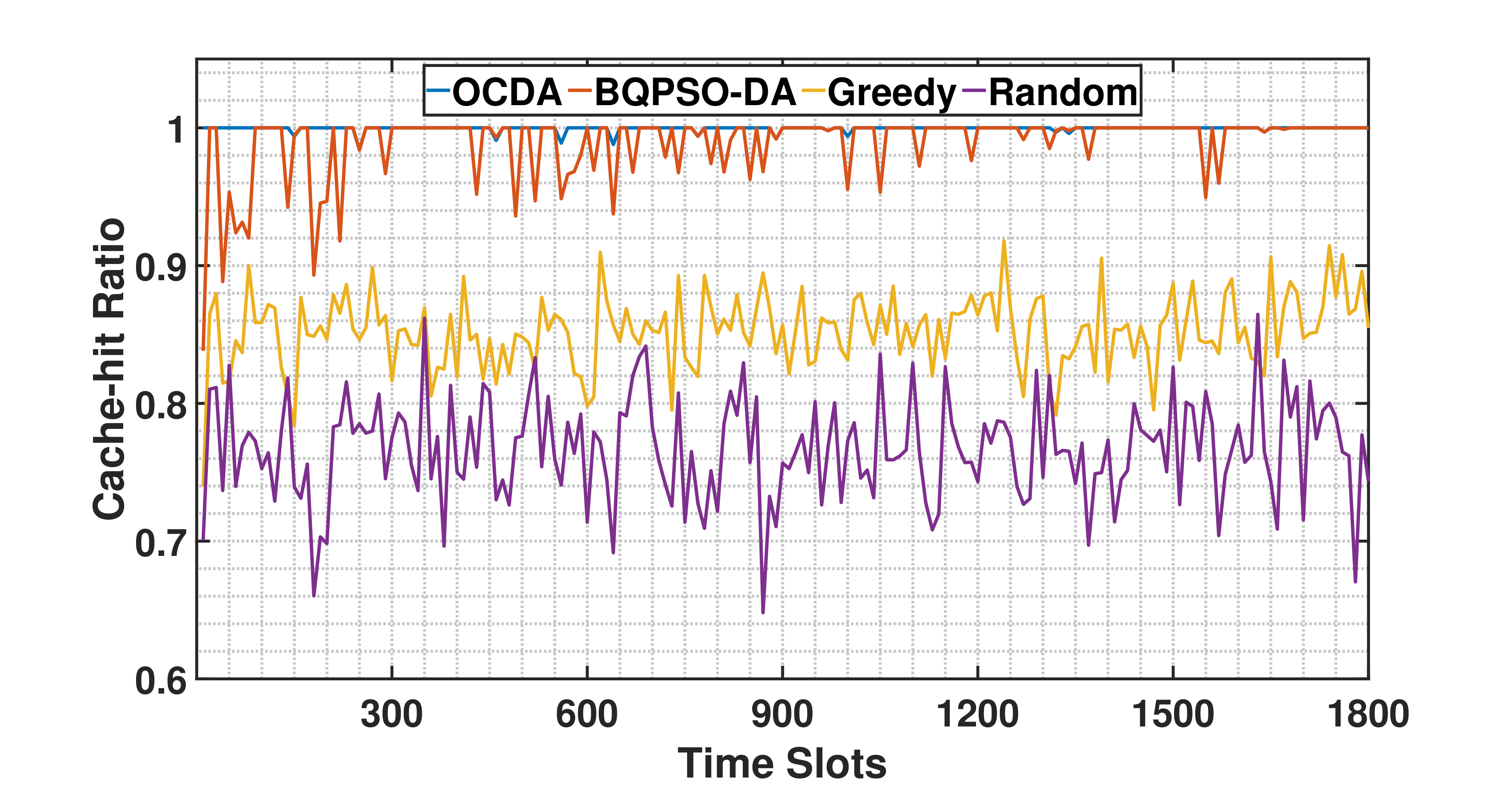}
\caption{Cache-hit Ratio}
\label{fig:ratio}
\end{minipage}
\begin{minipage}{.33\linewidth}
\centering
\includegraphics[width=\linewidth,trim=0 18em 0 0]{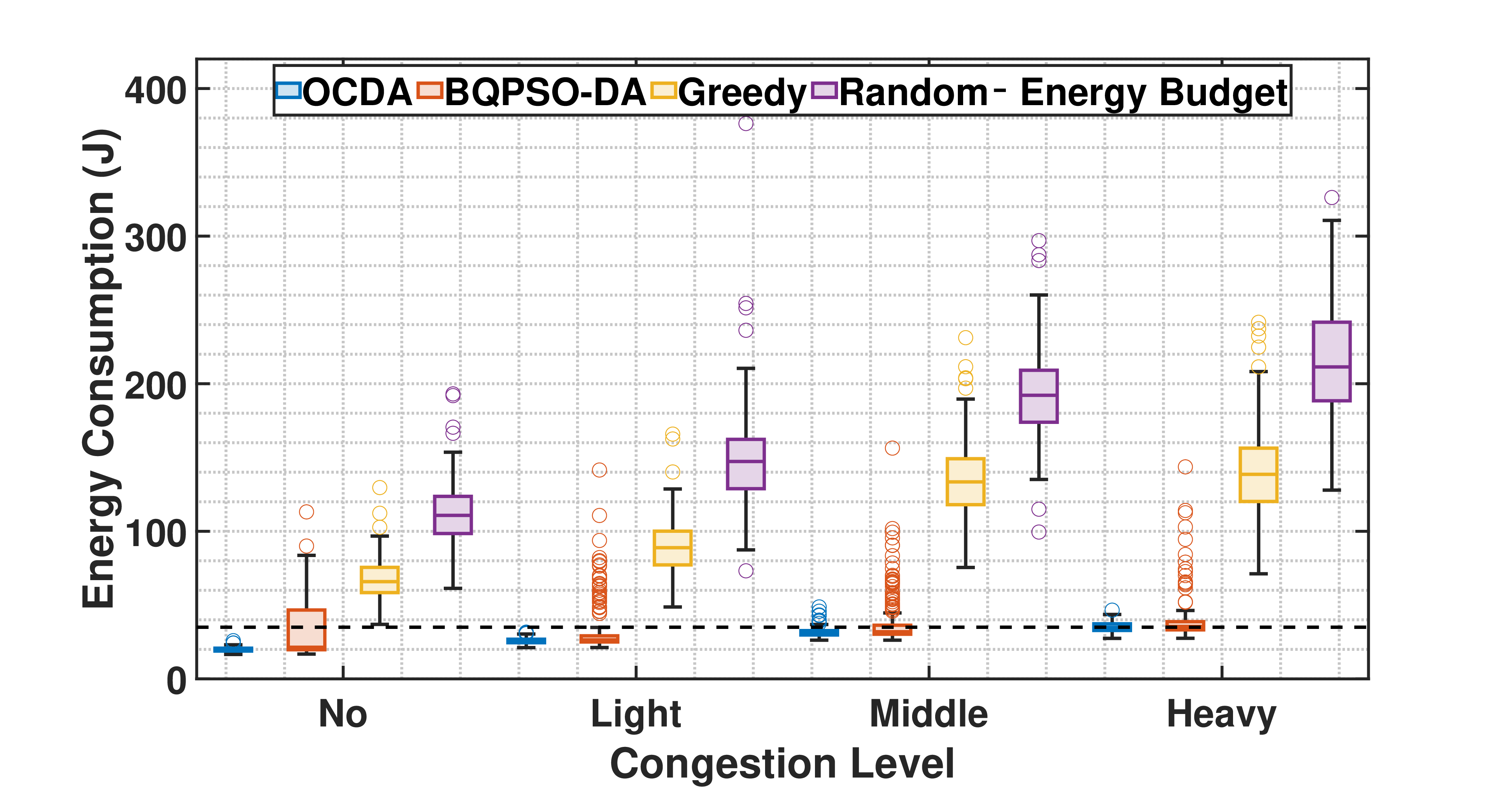}
\caption{Energy in 4 Cases}
\label{fig:energy_cases}
\end{minipage}

\begin{minipage}{.33\linewidth}
\centering
\includegraphics[width=\linewidth,trim=0 18em 0 0]{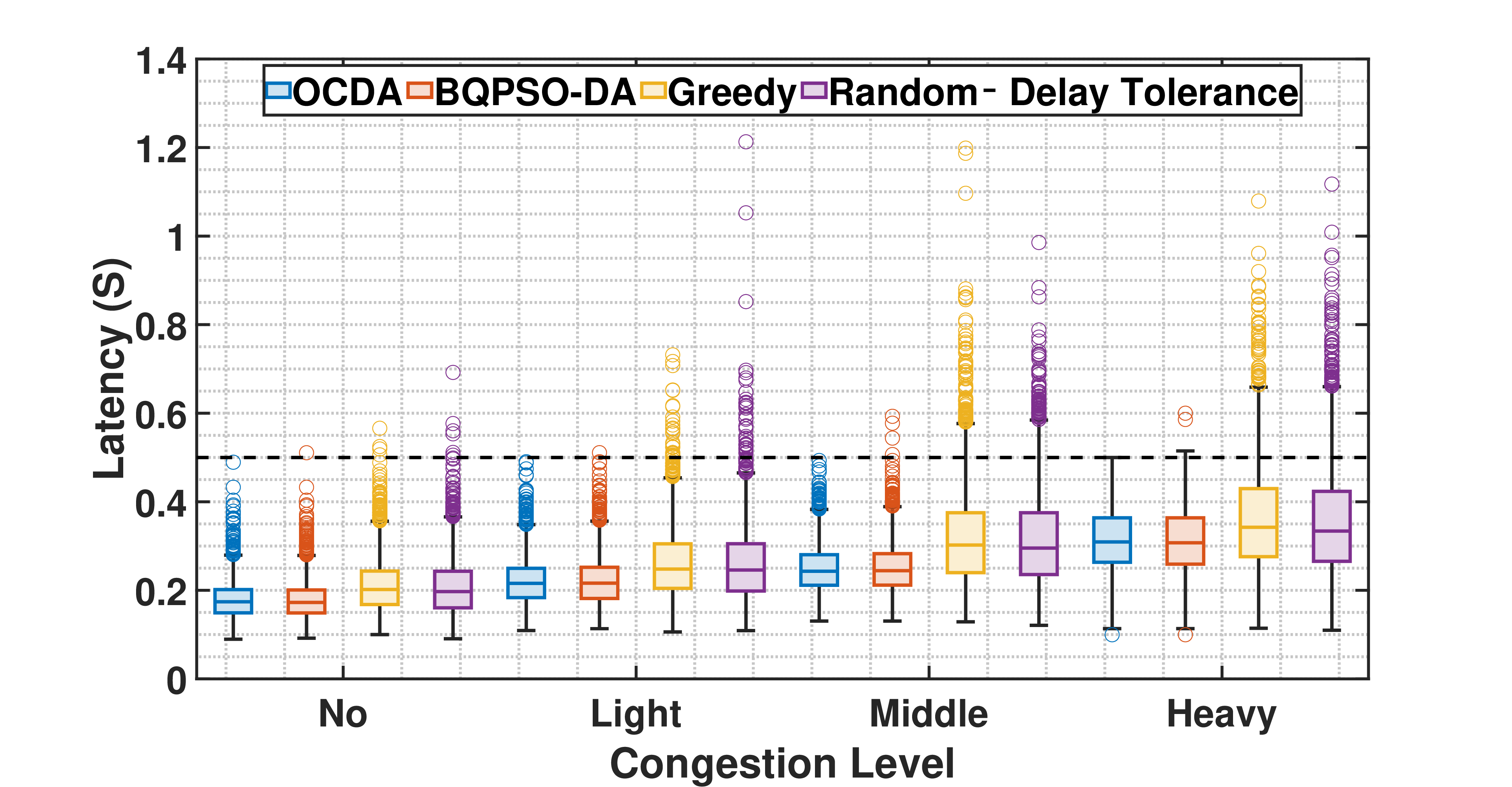}
\caption{Latency in 4 Cases}
\label{fig:delay_cases}
\end{minipage}
\begin{minipage}{.33\linewidth}
\centering
\includegraphics[width=\linewidth,trim=0 18em 0 0]{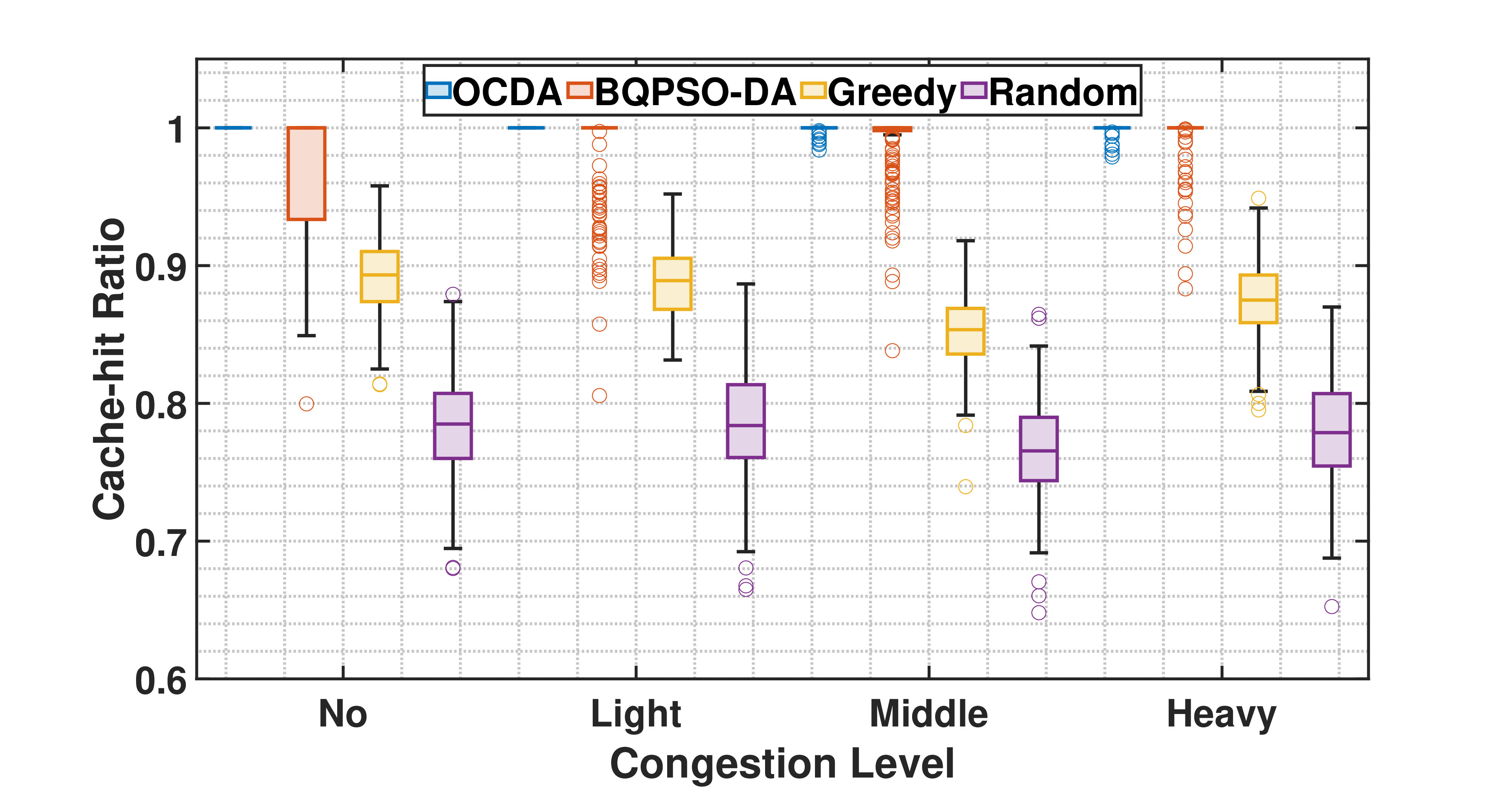}
\caption{Cache-hit Ratio in 4 Cases}
\label{fig:ratio_cases}
\end{minipage}
\begin{minipage}{.33\linewidth}
\centering
\includegraphics[width=\linewidth,trim=0 18em 0 0]{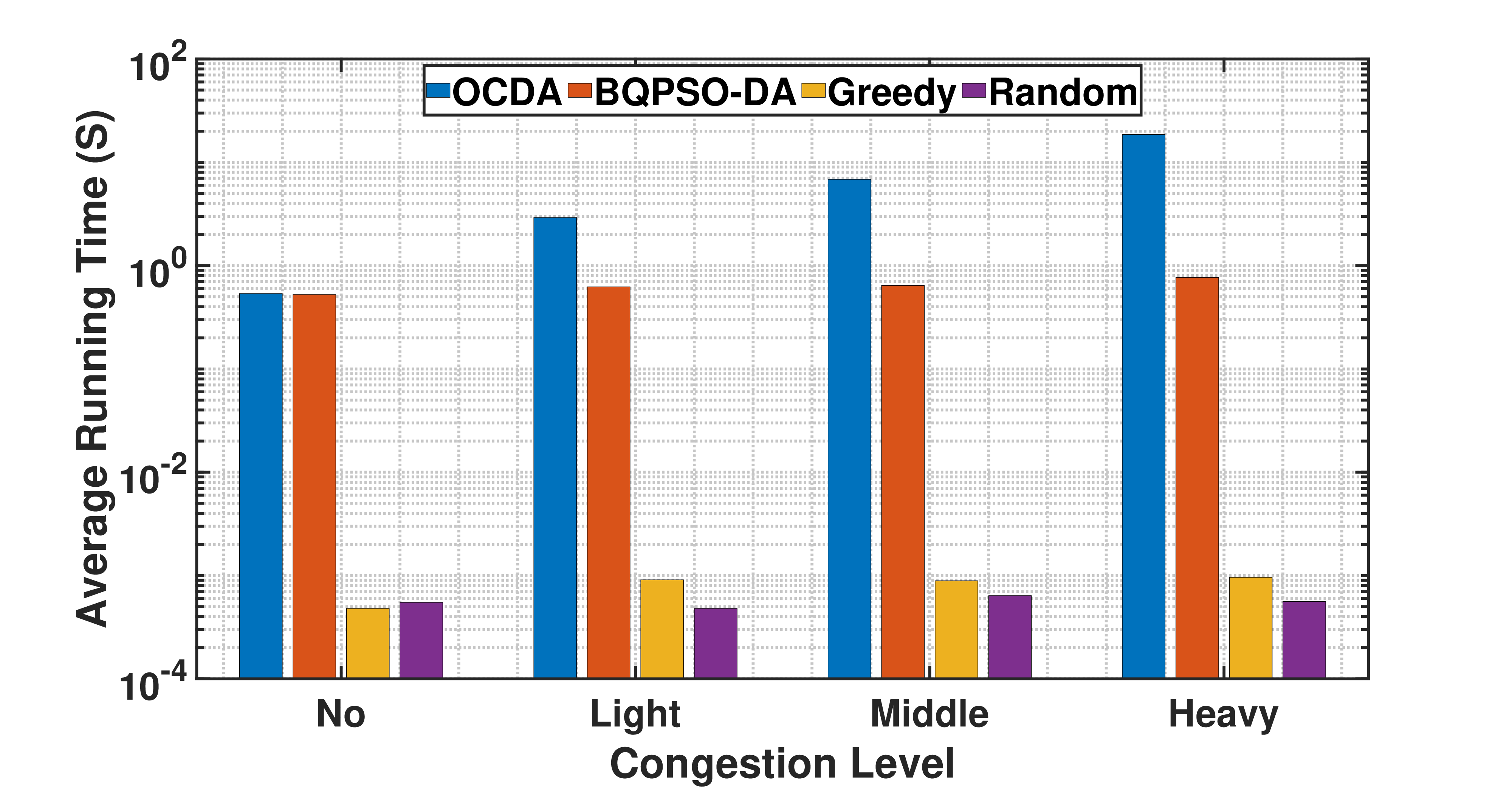}
\caption{Running Time in 4 Cases}
\label{fig:time}
\end{minipage}
\end{figure}

We first validate the theoretical performance guarantee of OCDA in Fig.\ref{fig:V}. With the energy budget $\bar{E}$ consistently set at $35$J, we increase $\mathcal{V}$ from $10^{-3}$ to $1.5\times10^{-2}$. Initially, there is a significant surge in the caching value, which subsequently plateaus at a near-optimal threshold. This trend indicates an inversely proportional reduction in the performance gap between OCDA and the optimal solution as $\mathcal{V}$ grows. Meanwhile, the time-averaged queue backlog is observed to increase proportionally to $\mathcal{V}$. Given that the enhancement in caching value becomes marginal beyond $\mathcal{V}=4\times10^{-3}$, we adopt this value for following analyses.

Then we delve into the performance of different schemes under middle congestion scenario. Fig.\ref{fig:fitness} depicts the fitness loss $F(\mathbf{X},\mathbf{Y})$ as defined by \eqref{fml:fitness}, in which OCDA provides most favorable outcomes, signifying its superior capability to maintain a proper state compared to other methods. BQPSO-DA also delivers a competitive solution compared to OCDA. In Fig.\ref{fig:energy_mean}, the time-averaged energy expenditure of BQPSO-DA nearly aligns with the long-term energy budget (represented by the black dot line) as OCDA performs. Furthermore,  Fig.\ref{fig:delay} demonstrates that BQPSO-DA remains beneath the latency threshold in the majority of time slots, while OCDA can consistently satisfy the delay limitation. 
Contrasting with OCDA and BQPSO-DA, both the Greedy strategy and Random Caching algorithm exhibit diminished cache-hit ratios in Fig.\ref{fig:ratio}. This shortfall results in a higher reliance on BS retrieval for SD demands, consequently translating to more energy consumption in Fig.\ref{fig:energy_mean} and excessive delay in Fig.\ref{fig:delay}. Benefiting from the prior knowledge of SD caching values $V(t)$, the Greedy approach outperforms Random Caching.  

Furthermore, Fig.\ref{fig:energy_cases}, \ref{fig:delay_cases}, and \ref{fig:ratio_cases} show different performance distributions under varying congestion levels, spanning from no congestion to heavy congestion. Overall, the performance generally deteriorates as congestion intensifies. Among four evaluated algorithms, OCDA demonstrates superior performance over all metrics. Particularly in high congestion scenarios as shown in Fig.\ref{fig:energy_cases}, the distribution of OCDA energy consumption is in close proximity to the budget line, which contributes to the virtual queue stability. Moreover, OCDA consistently meets the lantecy requirements and sustains an almost perfect cache-hit ratio across all congestion levels, as showcased in Fig.\ref{fig:delay_cases} and \ref{fig:ratio_cases}, respectively. 
BQPSO-DA also provides acceptable performance in light to heavy congestion cases, yet occasional outliers are observed, indicating sporadic instances of high energy consumption and response delay that impact the algorithmic robustness. Additionally, these figures reveal that the Greedy and Random Caching are suitable in scenarios with less stringent limitations and lighter congestion. 

Finally, an assessment of time complexity is shown in Fig.\ref{fig:time}. The running time for OCDA varies from 0.5s to more than 20s. This variance implies the necessity for OCDA to perform caching allocation well in advance and the requirement of an enhanced foresightful prediction algorithm. In contrast, BQPSO-DA can generate solutions within one time slot, which is more amendable to a slot-by-slot decision-making paradigm than OCDA. Greedy and Random Caching are significantly faster than two proposed algorithms, with computation time requiring less than 10ms.   
\section{Conclusions}
\label{sec:conclusions}

Based on the caching value model that encapsulates the temporal and spacial features of SD, we have introduced a stochastic programming framework addressing both caching placement and request allocation in this paper. Then we have decomposed the long-term model via the Lyapunov optimization. To refine efficiency, we have provided two tailored methods: the OCDA, a linearized online method optimized for overall performance; and the BQPSO-DA, a heuristic algorithm prioritized for swift execution. Numerical investigations reveal that both proposed schemes provide competitive solutions. Furthering this work, we plan to explore AI-driven algorithms to augment proposed methods with learning capabilities and reduce computational overhead.

\section*{Acknowledgments}
This work was sponsored in part by the National Key R\&D Program of China under Grant 2023YFE0208800, in part by Shandong Provincial Natural Science Foundation, China under Grant ZR2023QF084.

\bibliographystyle{unsrt}  
\bibliography{references}  

\end{document}